\newtheorem{proposition}{Proposition}
\newtheorem{lemma}[proposition]{Lemma}
\newtheorem{theorem}[proposition]{Theorem}
\def\squareforqed{\hbox{\rlap{$\sqcap$}$\sqcup$}}
\def\qed{\ifmmode\squareforqed\else{\unskip\nobreak\hfil
\penalty50\hskip1em\null\nobreak\hfil\squareforqed
\parfillskip=0pt\finalhyphendemerits=0\endgraf}\fi}
\def\endenv{\ifmmode\;\else{\unskip\nobreak\hfil
\penalty50\hskip1em\null\nobreak\hfil\;
\parfillskip=0pt\finalhyphendemerits=0\endgraf}\fi}
\newenvironment{proof}{\noindent \textbf{{Proof~} }}{\hfill $\blacksquare$}
\newcounter{remark}
\newcounter{example}
\newenvironment{example}[1][]{\refstepcounter{example}\par\medskip\noindent%
\textbf{Example~\theexample #1} }{\medskip}
\mathchardef\ordinarycolon\mathcode`\:
\def\vcentcolon{\mathrel{\mathop\ordinarycolon}}
\newcommand{\nc}{\newcommand}
\nc{\rnc}{\renewcommand}
\nc{\lbar}[1]{\overline{#1}}
\nc{\bra}[1]{\langle#1|}
\nc{\ket}[1]{|#1\rangle}
\nc{\ketbra}[2]{|#1\rangle\!\langle#2|}
\nc{\braket}[2]{\langle#1|#2\rangle}
\nc{\proj}[1]{| #1\rangle\!\langle #1 |}
\nc{\avg}[1]{\langle#1\rangle}
\nc{\rank}{\operatorname{Rank}}
\nc{\smfrac}[2]{\mbox{$\frac{#1}{#2}$}}
\nc{\tr}{\operatorname{Tr}}
\nc{\ox}{\otimes}
\nc{\dg}{\dagger}
\nc{\cH}{{\cal H}}
\nc{\cL}{{\cal L}}
\nc{\cD}{{\cal D}}
\nc{\cS}{{\cal S}}
\nc{\bu}{{\mathbf{u}}}
\nc{\supp}{{\operatorname{supp}}}
\def\grd@save@target#1{%
  \def\grd@target{#1}}
\def\grd@save@start#1{%
  \def\grd@start{#1}}
\tikzset{
  grid with coordinates/.style={
    to path={%
      \pgfextra{%
        \edef\grd@@target{(\tikztotarget)}%
        \tikz@scan@one@point\grd@save@target\grd@@target\relax
        \edef\grd@@start{(\tikztostart)}%
        \tikz@scan@one@point\grd@save@start\grd@@start\relax
        \draw[minor help lines,magenta] (\tikztostart) grid (\tikztotarget);
        \draw[major help lines] (\tikztostart) grid (\tikztotarget);
        \grd@start
        \pgfmathsetmacro{\grd@xa}{\the\pgf@x/1cm}
        \pgfmathsetmacro{\grd@ya}{\the\pgf@y/1cm}
        \grd@target
        \pgfmathsetmacro{\grd@xb}{\the\pgf@x/1cm}
        \pgfmathsetmacro{\grd@yb}{\the\pgf@y/1cm}
        \pgfmathsetmacro{\grd@xc}{\grd@xa + \pgfkeysvalueof{/tikz/grid with coordinates/major step}}
        \pgfmathsetmacro{\grd@yc}{\grd@ya + \pgfkeysvalueof{/tikz/grid with coordinates/major step}}
        \foreach \x in {\grd@xa,\grd@xc,...,\grd@xb}
        \node[anchor=north] at (\x,\grd@ya) {\pgfmathprintnumber{\x}};
        \foreach \y in {\grd@ya,\grd@yc,...,\grd@yb}
        \node[anchor=east] at (\grd@xa,\y) {\pgfmathprintnumber{\y}};
      }
    }
  },
  minor help lines/.style={
    help lines,
    step=\pgfkeysvalueof{/tikz/grid with coordinates/minor step}
  },
  major help lines/.style={
    help lines,
    line width=\pgfkeysvalueof{/tikz/grid with coordinates/major line width},
    step=\pgfkeysvalueof{/tikz/grid with coordinates/major step}
  },
  grid with coordinates/.cd,
  minor step/.initial=.2,
  major step/.initial=1,
  major line width/.initial=2pt,
}
\def\problem@s{}
\newcounter{problems@cnt}
\newcommand{\allproblems}{\problem@s}
\begin{document}
\title{Limitations of Classically-Simulable Measurements for Quantum State Discrimination}
\author{Chengkai Zhu}
\thanks{Chengkai Zhu and Zhiping Liu contributed equally to this work.}
\affiliation{Thrust of Artificial Intelligence, Information Hub,\\
The Hong Kong University of Science and Technology (Guangzhou), Guangzhou 511453, China}
\author{Zhiping Liu}
\thanks{Chengkai Zhu and Zhiping Liu contributed equally to this work.}

\affiliation{Thrust of Artificial Intelligence, Information Hub,\\
The Hong Kong University of Science and Technology (Guangzhou), Guangzhou 511453, China}
\affiliation{National Laboratory of Solid State Microstructures, School of Physics and Collaborative Innovation Center of Advanced Microstructures, Nanjing University, Nanjing 210093, China}
\author{Chenghong Zhu}
\author{Xin Wang}
\email{felixxinwang@hkust-gz.edu.cn}
\affiliation{Thrust of Artificial Intelligence, Information Hub,\\
The Hong Kong University of Science and Technology (Guangzhou), Guangzhou 511453, China}

%%%%%%%%%%%%%%%%%%%%%%%%%%%%%%%%%%%%%%%%%%%%%%%%%%%%%%%%%%%%%%%%%%%%%%%%%%%
\begin{abstract}
In the realm of fault-tolerant quantum computing, stabilizer operations play a pivotal role, characterized by their remarkable efficiency in classical simulation. This efficiency sets them apart from non-stabilizer operations within the quantum computational theory. In this Letter, we investigate the limitations of classically-simulable measurements in distinguishing quantum states. We demonstrate that any pure magic state and its orthogonal complement of odd prime dimensions cannot be unambiguously distinguished by stabilizer operations, regardless of how many copies of the states are supplied. We also reveal intrinsic similarities and distinctions between the quantum resource theories of magic states and entanglement in quantum state discrimination. The results emphasize the inherent limitations of classically-simulable measurements and contribute to a deeper understanding of the quantum-classical boundary.
\end{abstract}

\date{\today}
\maketitle

%%%%%%%%%%%%%%%%%%%%%%%%%%%%%%%%%%%%%%%%%%%%%%%%%%%%%%%%%%%%%%%%%%%%%%%%%%%
%%%%%%%%%%%%%%%%%%%%%%%%%%%%%%%%%%%%%%%%%%%%%%%%%%%%%%%%%%%%%%%%%%%%%%%%%%%
\emph{Introduction.---}
The computational power of quantum computers, including a substantial speed-up over their classical counterparts in solving certain number-theoretic problems~\cite{Shor_1997,grover1996fast,RevModPhysAndrew} and simulating quantum systems~\cite{lloyd1996universal,Childs_2018}, can only be unlocked with a scalable quantum computing solution. Fault-tolerant quantum computation (FTQC) provides a scheme to overcome obstacles of physical implementation such as decoherence and inaccuracies~\cite{shor1997faulttolerant,Campbell_2017,knill2005quantum}.

A cornerstone of the FTQC resides in stabilizer circuits, comprised exclusively of the Clifford gates. It is well-known that the stabilizer circuits can be efficiently classically simulated~\cite{gottesman1997stabilizer}, and therefore do not confer any quantum computational advantage. However, \textit{magic states} are quantum states that cannot be prepared using the stabilizer formalism~\cite{Veitch2014}, and can promote the stabilizer circuits to universal quantum computation via state injection~\cite{Zhou_2000,Gottesman_1999,Bravyi_2005}. In this context, the magic states and non-stabilizer operations characterize the computational power of universal quantum computation. 

While extensive research has explored the stabilizerness of quantum states and gates within circuits~\cite{Campbell_2012, Yoganathan_2019}, a crucial yet underexplored facet is the stabilizerness of quantum measurements~\cite{nielsen2010quantum} -- a critical process for reliably decoding classical information encoded in quantum states. In general, it is not applicable for one to access the physical properties of a locally interacting quantum many-body system by classical simulation. However, when information is encoded in a stabilizer state, the decoding process via stabilizer measurements remains efficiently classically simulable~\cite{Mari2012}. This prompts a fundamental question: can stabilizer measurements perfectly decode all tasks, or are there inherent limitations? Investigating the distinction in decoding capabilities between stabilizer measurements, which are classically efficiently simulable, and other measurements becomes paramount for understanding the intricate relationship between classical information encoded in quantum states and the measurement process.

The ability to retrieve classical information from quantum systems varies significantly with different measurements. One celebrated example is the quantum nonlocality without entanglement~\cite{Bennett1999b}. In essence, global measurements can always perfectly distinguish mutually orthogonal quantum states, while there is a set of product states that cannot be distinguished via local quantum operations and classical communications (LOCC). This distinction between global and local measurements has garnered substantial attention, proving to be intricately linked with quantum entanglement theory and the concept of nonlocality~\cite{Leung_2021,Bandyopadhyay2015LimSep,Childs2013,Bandyopadhyay2011,Calsamiglia2010,Halder2019,Walgate2000,Bennett1999}. This primitive gap between distinct classes of measurements makes quantum state discrimination (QSD) a crucial aspect of fundamental physics~\cite{bae2015quantum,Correa_2022}, where it can be used to test the principles and nature of quantum mechanics. Moreover, QSD has led to fruitful applications in quantum cryptography~\cite{Gisin_2002, Cleve_1999, Leverrier_2009}, quantum dimension witness~\cite{Brunner_2013, Hendrych_2012} and quantum data hiding~\cite{Terhal2001,Eggeling2002,Matthews2009}. 

Inspired by the intrinsic behavior of different measurements in entanglement theory, we raise a natural and important question for understanding the limit and power of the classically-simulable measurements. In particular, is there a sharp gap between the classically-simulable measurements and those that could potentially promote universal quantum computation? If such a gap exists, it will imply considerable advantages that the resource of magic states can provide to the measurement in quantum information processing.

In this Letter, we give an affirmative answer to this question. We show that any pure magic state and its orthogonal complement cannot be unambiguously distinguished via Positive Operator-Valued Measures (POVMs) having positive discrete Wigner functions, which are classically-simulable and strictly include stabilizer measurements~\cite{Mari2012, Veitch2012}, no matter how many copies of the states are supplied. We also demonstrate an exponential decay on the asymptotic minimal error probability for distinguishing the Strange state and its orthogonal complement via POVMs having positive discrete Wigner functions, where the Strange state is a representative qutrit magic state defined as $\ket{\mathbb{S}} \coloneqq (\ket{1}- \ket{2})/\sqrt{2}$~\cite{Veitch2014}. 

In addition, we show that every set of orthogonal pure stabilizer states can be unambiguously distinguished via POVMs having positive discrete Wigner functions, indicating there is no similar phenomenon as the unextendible product basis (UPB) in entanglement theory. Moreover, we demonstrate that even with the assistance of one or two copies of any qutrit magic state, the Strange state and its orthogonal complement remain indistinguishable via POVMs having positive discrete Wigner functions. It is different from entanglement theory where a single copy of the Bell state is always sufficient to perfectly distinguish a pure entangled state and its orthogonal complement using PPT POVMs~\cite{Yu2014}.

%%%%%%%%%%%%%%%%%%%%%%%%%%%%%%%%%%%%%%%%%%%%%%%%%%%%%%%%%%%%%%%%%%%%%%%%%%%
%%%%%%%%%%%%%%%%%%%%%%%%%%%%%%%%%%%%%%%%%%%%%%%%%%%%%%%%%%%%%%%%%%%%%%%%%%%
\emph{Preliminaries.---}
To characterize the stabilizerness of quantum states and operations, we first recall the definition of the discrete Wigner function~\cite{WOOTTERS19871,Gross_2006a,Gross_2006b}. Throughout the paper, we study the Hilbert space $\cH_d$ with an odd dimension $d$, and if the dimension is not prime, it should be understood as a tensor product of Hilbert spaces each having an odd prime dimension. Let $\cL(\cH_d)$ be the space of linear operators mapping $\cH_d$ to itself and $\cD(\cH_d)$ be the set of density operators acting on $\cH_d$. It is worth noting that qudit-based quantum computing is gaining increasing significance, as numerous problems in the field are awaiting further exploration~\cite{Wang_2020}.

Given a standard computational basis $\{\ket{j} \}_{j=0,\cdots,d-1}$, the unitary boost and shift operators $X, Z \in \cL(\cH_d)$ are defined by $X\ket{j} = \ket{j \oplus 1}, Z\ket{j} = w^j \ket{j}$, where $w = e^{2\pi i/d}$ and $\oplus$ is the addition in $\mathbb{Z}_d$. The \textit{discrete phase space} of a single $d$-level system is $\mathbb{Z}_d \times \mathbb{Z}_d$. At each point $\mathbf{u}=(a_1,a_2) \in \mathbb{Z}_d \times \mathbb{Z}_d$,
the discrete Wigner function of a state $\rho$ is defined as $W_\rho(\mathbf{u}):=\frac{1}{d} \operatorname{Tr}\left[A_{\mathbf{u}} \rho\right]$ where $A_{\mathbf{u}}$ is the phase-space point operator given by $A_\mathbf{0}:=\frac{1}{d} \sum_{\mathbf{u}} T_{\mathbf{u}}, A_{\mathbf{u}}:=T_{\mathbf{u}} A_\mathbf{0} T_{\mathbf{u}}^{\dagger}$ and $T_{\mathbf{u}}=\tau^{-a_1 a_2} Z^{a_1} X^{a_2}, \tau=e^{(d+1) \pi i / d}$. We say a state $\rho$ has positive discrete Wigner functions (PWFs) if $W_{\rho}(\mathbf{u}) \geq 0, \forall \mathbf{u} \in \mathbb{Z}_d \times \mathbb{Z}_d$ and briefly call it PWF state. Let $\mathbf{E} = \{ E_j\}_{j=0}^{n-1}$ be an $n$-valued POVM acting on $\cH_d$ with $\sum_{j=0}^{n-1}E_j = \mathds{1}$. 
The discrete Wigner function of each effect $E_j$ is given by $W(E_j|\mathbf{u}) = \tr[E_jA_\mathbf{u}]$. $\mathbf{E}$ is said to be a PWF POVM if each $E_j$ has PWFs. More details can be found in appendix.

In odd prime dimensions, quantum circuits with initial states and all subsequent quantum operations having PWFs, which strictly include stabilizer (STAB) operations, admit efficient classical simulations~\cite{Mari2012, Veitch2012}, extending the Gottesman-Knill theorem. On the contrary, negativity in Wigner functions is usually regarded as an indication of `nonclassicality'~\cite{Galv_o_2005, Cormick_2006} and identified as a computational resource. Thus, PWF POVMs are recognized as classically-simulable measurements~\cite{Howard_2014}. The exclusive applicability of these results to odd prime dimensions may stem from the unique property that only quantum systems of such dimensions exhibit covariance of the Wigner function w.r.t. Clifford operations~\cite{Huangjun_2016}. It's worth noting that there exist mixed magic states with PWFs, rendering them useless for magic state distillation~\cite{Bravyi_2005}. These states are termed \textit{bound universal states}~\cite{Campbell_2010}, analogous to states with a positive partial transpose (PPT) in entanglement distillation~\cite{Horodecki_1998}. Therefore, PWF POVMs strictly include all STAB POVMs as
\begin{equation*}
     \text{STAB POVMs} \subsetneq \textrm{PWF POVMs} \subsetneq \textrm{All POVMs}.
\end{equation*}

%%%%%%%%%%%%%%%%%%%%%%%%%%%%%%%%%%%%%%%%%%%%%%%%%%%%%%%%%%%%%%%%%%%%%%%%%%%
\emph{Asymptotic limits of PWF POVMs for a pure state and its orthogonal complement.---}
Our primary aim is to elucidate the constraints inherent in measurements that can be efficiently classically simulated. QSD describes a general process of extracting classical information from quantum systems via measurements. To distinguish two states, one usually performs a two-outcome POVM on the received state and then determines which state it is according to the measurement outcome. 

It is well-known that the asymptotic regime of QSD can unravel the underlying mechanism of entanglement~\cite{Walgate2000,Bandyopadhyay2011,cheng2023discrimination}. The limit of local measurements exhibits a fundamental distinction between pure and mixed states~\cite{Bandyopadhyay2011}. Moreover, the asymptotic error probability in QSD is interlinked with the quantum relative entropy, Petz's R\'enyi divergence~\cite{Audenaert_2008}, and the sandwiched R\'enyi divergence~\cite{Ogawa_2005, Mosonyi_2014}. Notably, in the regime of many copies, greater flexibility and options exist for the potential POVMs. However, we shall show a wide range of quantum states that cannot be unambiguously distinguished via PWF POVMs, including STAB POVMs, no matter how many copies are supplied.

\begin{theorem}\label{thm:main_result}
Let $\rho_0 \in \cD(\cH_d)$ be a pure magic state and $\rho_1=(\mathds{1}-\rho_0)/(d - 1)$ be its orthogonal complement, where $\mathds{1}$ is the identity matrix. Then for any integer $n\in \mathbb{Z}^+$, $\rho_0^{\ox n}$ and $\rho_1^{\ox n}$ cannot be unambiguously distinguished by PWF POVMs.
\end{theorem}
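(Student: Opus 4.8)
The plan is to prove the sharper statement that the \emph{only} PWF effect $E_0$ satisfying $\tr[E_0\,\rho_1^{\ox n}]=0$ is $E_0=0$. This suffices, because conclusively identifying $\rho_0^{\ox n}$ in an unambiguous protocol requires an outcome $E_0$ that never fires on $\rho_1^{\ox n}$ (the no-false-positive condition $\tr[E_0\,\rho_1^{\ox n}]=0$) yet has $\tr[E_0\,\rho_0^{\ox n}]>0$; forcing every such $E_0$ to vanish makes this impossible, so no PWF POVM can confirm $\rho_0^{\ox n}$, and unambiguous discrimination fails. The enabling observation is that although $\rho_0$ is magic, its complement is not: since $\rho_0$ is pure, $W_{\rho_0}(\bu)=\tfrac1d\bra{\psi}A_{\bu}\ket{\psi}\le\tfrac1d$ because each $A_{\bu}$ has largest eigenvalue $1$, whence $W_{\rho_1}(\bu)=\tfrac1{d-1}\big(\tfrac1d-W_{\rho_0}(\bu)\big)\ge0$. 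Thus $\rho_1$, and hence $\rho_1^{\ox n}$, is a PWF state, which is what turns the no-error condition into a tractable constraint.

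For a single copy the statement is immediate. If $E_0\ge0$ is PWF with $\tr[E_0\,\rho_1]=0$, then positivity of both operators forces $\supp E_0\subseteq\ker\rho_1=\mathrm{span}\{\ket{\psi}\}$, so $E_0=\mu\,\proj{\psi}$ with $\mu\ge0$. But $\proj{\psi}=\rho_0$ is magic, so $W(E_0|\bu)=\mu\, d\,W_{\rho_0}(\bu)$ is negative wherever $W_{\rho_0}$ is, and the PWF requirement forces $\mu=0$.

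For general $n$ I would induct on the number of copies, the key device being a reduction that feeds the PWF state $\rho_1$ into one copy and retains the other $n-1$. Given a PWF effect $E_0$ on $n$ copies with $\tr[E_0\,\rho_1^{\ox n}]=0$, fix a copy $j$ and set $\tilde E:=\tr_j\big[(\mathds1\ox\cdots\ox\rho_1\ox\cdots\ox\mathds1)\,E_0\big]$, with $\rho_1$ in the $j$-th slot, an effect on the remaining $n-1$ copies. This $\tilde E$ is again PWF: positivity follows from cyclicity of the partial trace (writing $\rho_1=\sqrt{\rho_1}\sqrt{\rho_1}$ and conjugating $E_0$), while its Wigner function is a nonnegative combination, $W(\tilde E|\bu')=\sum_{\bu}W_{\rho_1}(\bu)\,W(E_0|\bu,\bu')$ with $W_{\rho_1}\ge0$ and $W(E_0|\cdot)\ge0$. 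Moreover $\tr[\tilde E\,\rho_1^{\ox (n-1)}]=\tr[E_0\,\rho_1^{\ox n}]=0$, so the induction hypothesis gives $\tilde E=0$. Because $\tilde E$ is the partial trace of a positive operator, $\tilde E=0$ forces $E_0$ to annihilate every vector whose $j$-th tensor factor is orthogonal to $\ket{\psi}$. Calling this subspace $V_j$ and letting $j$ range over all copies, $(\ket{\psi}^{\ox n})^{\perp}=\sum_j V_j$, so $E_0=\mu\,\proj{\psi}^{\ox n}$; the PWF condition then kills $\mu$ exactly as in the base case, since $\prod_i W_{\rho_0}(\bu_i)$ is negative at any point where one factor sits on a negative value of $W_{\rho_0}$ and the rest on positive values. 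Hence $E_0=0$.

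The main obstacle is precisely the passage to $n\ge 2$: the clean single-copy argument breaks because $\ker\rho_1^{\ox n}$ has dimension $d^{n}-(d-1)^{n}$, so $\tr[E_0\,\rho_1^{\ox n}]=0$ no longer pins $E_0$ to a multiple of $\proj{\psi}^{\ox n}$, and the Wigner-support condition alone leaves $E_0$ underdetermined. The inductive reduction is what restores this rigidity, and its crux is the technical point that partial-tracing an $n$-copy PWF effect against the PWF state $\rho_1$ returns a PWF effect on fewer copies; this PWF-preservation, resting on multiplicativity of the phase-space representation under tensor products together with nonnegativity of both $W_{\rho_1}$ and $W(E_0|\cdot)$, is the step I would verify most carefully. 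A complementary remark worth including is that the asymmetry is genuine: since $\rho_1^{\ox n}$ is PWF it \emph{can} be confirmed by a legitimate PWF POVM, so the entire obstruction lives on the magic branch $\rho_0^{\ox n}$.
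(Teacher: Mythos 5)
Your proof is correct and takes essentially the same route as the paper's: the engine in both is that partial-tracing a PWF effect against the PWF complement state $\rho_1$ on one tensor factor produces a PWF, positive operator on fewer copies (a nonnegative combination of phase-space point operators), which drives an induction on the number of copies (the paper packages this as Lemma~\ref{lem:unext_to_strongly_unext} on strongly PWF unextendible subspaces). The only substantive difference is how the argument is closed: the paper handles the degenerate case $\sigma'=0$ by passing to $\tr_1\sigma$, whereas you exploit the rank-one kernel of $\rho_1$ to pin $E_0$ to $\mu\,\proj{\psi}^{\ox n}$ and then kill $\mu$ directly via the negativity of $\prod_i W_{\rho_0}(\mathbf{u}_i)$ — a valid, slightly more specialized finish.
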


Theorem~\ref{thm:main_result} reveals a significant disparity in the ability of PWF POVMs and other measurements in QSD. It indicates that the classical information you are allowed to extract from the encoded states is limited when the measurements allowed are restricted to those classically-simulable ones. 
The limitation of the classically-simulable measurements cannot be overcome even by increasing the number of copies of the states.

From the angle of quantum resource theories (QRTs)~\cite{RMP_Chitambar_2019}, this theorem unravels the challenge of distinguishing a pure resourceful state and its orthogonal complement via free operations in the QRT of magic states. This parallels a phenomenon in entanglement theory where any pure entangled state and its orthogonal complement cannot be unambiguously distinguished via PPT POVMs with an arbitrary number of copies provided~\cite{Yu2014,li2017indistinguishability, cheng2023discrimination}. However, perfect distinguishability is achievable through global measurements. Notably, Takagi and Regula introduced a quantifier of resourcefulness for measurements~\cite{Takagi_2019}, demonstrating that resourceful measurements can outperform free measurements in certain QSD tasks~\cite{Oszmaniec_2019}. Here, our result further specifies the constraints of free measurements within the QRT of magic states, revealing that free operations cannot distinguish a pure resourceful state and its orthogonal complement, even in the many-copy regime.

The proof of Theorem~\ref{thm:main_result} relies on Lemma~\ref{lem:unext_to_strongly_unext} which identifies the feature of \textit{PWF unextendible} subspaces, and a fact that the orthogonal complement of any pure magic state is PWF since $-1/d\leq W_{\rho}(\mathbf{u})\leq 1/d, \forall \rho \in \cD(\cH_d), \forall \mathbf{u} $~\cite{Huangjun_2016}. We call a subspace $\cS \subseteq \cH_d$ \textit{PWF unextendible} if there is no PWF state $\rho$ whose support is a subspace of $\cS^\perp$, and \textit{PWF extendible} otherwise. A subspace $\cS \subseteq \cH_d$ is called \textit{strongly PWF unextendible} if for any positive integer $n$, $\cS^{\ox n}$ is PWF unextendible. As a simple example, if we let $\cS^\perp$ be a one-dimensional subspace spanned by the strange state $\ket{\mathbb{S}}$, then $\cS$ is (strongly) PWF unextendible.
In fact, the unextendibility of subspaces indicates the distinguishability of quantum states. It is well-known that a UPB for a multipartite quantum system indicates indistinguishability under LOCC operations~\cite{Bennett1999}.

\begin{lemma}\label{lem:unext_to_strongly_unext}
    For a PWF unextendible subspace $\cS \subseteq \cH_d$, if there is a PWF state $\rho \in \cD(\cS)$ such that $\mathrm{supp}(\rho) = \cS$, then $\cS$ is strongly PWF unextendible.
\end{lemma}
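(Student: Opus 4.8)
The plan is to prove the statement by induction on $n$, establishing that $\cS^{\ox n}$ is PWF unextendible for every $n\geq 1$, with the base case $n=1$ being exactly the hypothesis. The argument rests on two elementary properties of the discrete Wigner function that follow from the appendix: the phase-space point operators multiply across tensor factors, $A_{(\mathbf{u},\mathbf{v})}=A_\mathbf{u}\ox A_\mathbf{v}$, and resolve the identity as $\mathds{1}=\tfrac1d\sum_\mathbf{u}A_\mathbf{u}$, with $\tr[A_\mathbf{u}A_\mathbf{v}]=d\,\delta_{\mathbf{u}\mathbf{v}}$. The first consequence I would record is that partial tracing preserves positivity of Wigner functions: since $\mathds{1}=\tfrac1d\sum_\mathbf{v}A_\mathbf{v}$, one gets $W_{\tr_2\sigma}(\mathbf{u})=\sum_\mathbf{v}W_\sigma(\mathbf{u},\mathbf{v})\geq 0$ whenever $\sigma$ is PWF. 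The second, and the crux, is a ``filtering'' map built from the full-support witness $\rho$.

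For the inductive step I would argue by contradiction: suppose $\cS^{\ox n}$ is PWF extendible, witnessed by a PWF state $\sigma$ on $\cH_d^{\ox n}$ with $\supp(\sigma)\perp\cS^{\ox n}$. Using $\rho=\sum_\mathbf{u}W_\rho(\mathbf{u})A_\mathbf{u}$ with $W_\rho\geq 0$ and $\supp(\rho)=\cS$, I would define the single-copy operator $\tilde\sigma:=\tr_{2\cdots n}\!\big[\sigma\,(\mathds{1}\ox\rho^{\ox(n-1)})\big]$. Expanding $\rho^{\ox(n-1)}$ in the point-operator basis and using $A_{(\mathbf{u}_1,\dots,\mathbf{u}_n)}=A_{\mathbf{u}_1}\ox\cdots\ox A_{\mathbf{u}_n}$ yields
\begin{equation*}
W_{\tilde\sigma}(\mathbf{u}_1)=d^{\,n-1}\sum_{\mathbf{u}_2,\dots,\mathbf{u}_n}W_\sigma(\mathbf{u}_1,\dots,\mathbf{u}_n)\prod_{k=2}^{n}W_\rho(\mathbf{u}_k)\geq 0,
\end{equation*}
so $\tilde\sigma$ has a non-negative Wigner function. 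Moreover $\tr[\tilde\sigma\,P_\cS]=\tr[\sigma\,(P_\cS\ox\rho^{\ox(n-1)})]$, and since $P_\cS\ox\rho^{\ox(n-1)}$ is positive with support exactly $\cS^{\ox n}$, orthogonal to $\supp(\sigma)$, this trace vanishes; hence $\supp(\tilde\sigma)\subseteq\cS^\perp$. Thus, after normalization, $\tilde\sigma$ would be a PWF state supported in $\cS^\perp$, contradicting the PWF unextendibility of $\cS$.

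The main obstacle, and the point requiring care, is that $\tilde\sigma$ could be the zero operator: its trace $t=\tr[\sigma\,(\mathds{1}\ox\rho^{\ox(n-1)})]$ is a priori only non-negative. I would resolve this with a case split. If $t>0$, the normalized $\tilde\sigma$ delivers the contradiction above with the $n=1$ base hypothesis. If $t=0$, then $\sigma\geq 0$ and $\mathds{1}\ox\rho^{\ox(n-1)}\geq 0$ force $\supp(\sigma)\perp\cH_d\ox\cS^{\ox(n-1)}$; consequently $\tr_1\sigma$ is PWF (by the partial-trace property) and satisfies $\tr[(\tr_1\sigma)P_{\cS^{\ox(n-1)}}]=\tr[\sigma\,(\mathds{1}\ox P_{\cS^{\ox(n-1)}})]=0$, so its support lies in $(\cS^{\ox(n-1)})^\perp$. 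This exhibits PWF extendibility of $\cS^{\ox(n-1)}$, contradicting the inductive hypothesis. Either branch yields a contradiction, completing the induction and hence the proof that $\cS$ is strongly PWF unextendible.
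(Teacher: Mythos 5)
Your proof is correct and follows essentially the same route as the paper's: a contradiction argument built on the filtered partial trace $\tr_{2\cdots n}[\sigma(\mathds{1}\ox\rho^{\ox(n-1)})]$, whose Wigner function is manifestly non-negative, together with the same case split on whether this operator vanishes (falling back to $\tr_1\sigma$ when it does). The only difference is presentational — you write the induction step for general $n$ in one shot where the paper details $n=2$ and sketches the rest — so no further comparison is needed.
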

We note that Lemma~\ref{lem:unext_to_strongly_unext} implies that for a set of orthogonal quantum states $\{\rho_1,...,\rho_n\}$, if there is a $\rho_i$ whose support is strongly PWF unextendible, then $\{\rho_1,...,\rho_n\}$ cannot be unambiguously distinguished by PWF POVMs no matter how many copies are used. This leads to and generalizes the result of Theorem~\ref{thm:main_result}. We sketch the proof of Lemma~\ref{lem:unext_to_strongly_unext} as follows. 

First, we demonstrate that $\cS^{\ox 2}$ is PWF unextendible through a proof by contradiction. Suppose $\rho_s\in \cD(\cS)$ is a PWF state such that $\mathrm{supp}(\rho_s) = \cS$. If there is a PWF state $\sigma$ supporting on $(\cS^{\ox 2})^\perp$, then we have $\tr[\sigma(\rho_s\ox \rho_s)] = 0$ which leads to
$\tr[\rho_s\tr_2[\sigma(\mathds{1}\ox \rho_s )] ] = 0$. It is easy to check that $\sigma'= \tr_2[\sigma(\mathds{1}\ox \rho_s )]$ is a positive semi-definite operator with PWFs if it is non-zero. If it is zero, we can check that $\tr_1\sigma$ is a positive semi-definite operator with PWFs. In either case, we will get a PWF state supported on $\cS^\perp$, a contradiction to the PWF unextendibility of $\cS$. Hence, we conclude that $\cS^{\ox 2}$ is PWF unextendible. Using a similar technique, we can conclusively demonstrate that $\cS^{\ox n}$ is PWF unextendible for any positive integer $n$. The details can be found in appendix.

%%%%%%%%%%%%%%%%%%%%%%%%%%%%%%%%%%%%%%%%%%%%%%%%%%%%%%%%%%%%%%%%%%%%%%%%%%%
%%%%%%%%%%%%%%%%%%%%%%%%%%%%%%%%%%%%%%%%%%%%%%%%%%%%%%%%%%%%%%%%%%%%%%%%%%%
\vspace{2mm}
\emph{Asymptotic limits of PWF POVMs for mixed states.---}
Followed by Lemma~\ref{lem:unext_to_strongly_unext}, we note that Theorem~\ref{thm:main_result} displays a special case of a strongly PWF unextendible subspace. The orthogonal complement of a pure magic state turns out to be a PWF state which lies in a $d-1$ dimensional PWF unextendible subspace. This prompts an intriguing inquiry into the minimal dimension of such subspace. Notably, we will show there is a much smaller strongly PWF unextendible subspace, indicating the presence of mixed magic states that cannot be unambiguously distinguished from their orthogonal complements via PWF POVMs in the many-copy scenario. 

\begin{proposition}\label{prop:case_allmagic}
There exists a strongly PWF unextendible subspace $\cS\subseteq \cH_d$ of dimension $(d+1)/2$.
\end{proposition}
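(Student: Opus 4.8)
The plan is to exhibit $\cS$ explicitly as an eigenspace of a single phase-space point operator and then invoke Lemma~\ref{lem:unext_to_strongly_unext}. Recall that in odd dimension each $A_\mathbf{u}$ is a Hermitian involution (so $A_\mathbf{u}^2 = \mathds{1}$) with $\tr[A_\mathbf{u}] = 1$; consequently its eigenvalues are $\pm 1$, and writing $m_\pm$ for the multiplicities, $m_+ + m_- = d$ together with $m_+ - m_- = 1$ forces $m_+ = (d+1)/2$ and $m_- = (d-1)/2$. I would therefore take $\cS$ to be the $+1$-eigenspace of $A_\mathbf{0}$, so that $\dim\cS = (d+1)/2$ as required, while $\cS^\perp$ is exactly the $(d-1)/2$-dimensional $-1$-eigenspace of $A_\mathbf{0}$.

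The PWF unextendibility of $\cS$ is then immediate: if $\rho \in \cD(\cH_d)$ had $\supp(\rho)\subseteq \cS^\perp$, then $A_\mathbf{0}\rho = -\rho$, whence $W_\rho(\mathbf{0}) = \frac{1}{d}\tr[A_\mathbf{0}\rho] = -\frac{1}{d} < 0$, so $\rho$ cannot be a PWF state. Thus no PWF state is supported on $\cS^\perp$.

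It remains to produce a full-rank PWF state on $\cS$, which is the one place a short computation is needed. The natural candidate is the normalized projector onto $\cS$, namely $\rho = \frac{1}{d+1}(\mathds{1} + A_\mathbf{0})$. Its eigenvalues are $\frac{2}{d+1}$ on $\cS$ and $0$ on $\cS^\perp$, so it is a density operator with $\supp(\rho) = \cS$. Using $\tr[A_\mathbf{u}] = 1$ and the orthogonality relation $\tr[A_\mathbf{u} A_\mathbf{v}] = d\,\delta_{\mathbf{u},\mathbf{v}}$, its Wigner function evaluates to $W_\rho(\mathbf{u}) = \frac{1}{d(d+1)}\left(1 + d\,\delta_{\mathbf{u},\mathbf{0}}\right)$, equal to $1/d$ at $\mathbf{u} = \mathbf{0}$ and $\frac{1}{d(d+1)}$ otherwise, hence nonnegative at every phase-space point. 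Therefore $\rho$ is a PWF state whose support is all of $\cS$.

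With these two facts in hand, Lemma~\ref{lem:unext_to_strongly_unext} applies directly: $\cS$ is PWF unextendible and carries a full-rank PWF state, so $\cS$ is strongly PWF unextendible, completing the proof. I do not expect a serious obstacle here; the only inputs are the standard structural facts about $A_\mathbf{0}$ (involution, unit trace, operator orthogonality), and the argument is uniform in $d$ because $A_\mathbf{0}$ remains a trace-one Hermitian involution even when $d$ is a product of odd primes. The mild point worth double-checking is precisely this uniformity --- that the multiplicities $(d\pm 1)/2$ survive for composite $d$ --- but it follows at once from $A_\mathbf{0}^2 = \mathds{1}$ and $\tr[A_\mathbf{0}] = 1$ regardless of how $d$ factorizes.
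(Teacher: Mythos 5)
Your proof is correct and follows essentially the same route as the paper's: the paper also takes $\cS$ to be the $+1$-eigenspace of $A_\mathbf{0}$, shows any state supported on the $-1$-eigenspace has $W_\rho(\mathbf{0})=-1/d$, exhibits the normalized projector $\frac{1}{d+1}(\mathds{1}+A_\mathbf{0})$ as a full-rank PWF state on $\cS$, and invokes Lemma~\ref{lem:unext_to_strongly_unext}. The only cosmetic difference is that you derive the multiplicities $(d\pm1)/2$ from the involution and trace conditions directly, while the paper computes them from the explicit matrix form of $A_\mathbf{0}$; your Wigner-function value $\frac{1}{d(d+1)}(1+d\,\delta_{\mathbf{u},\mathbf{0}})$ is in fact the correct one.
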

This proposition implies there is a $(d-1)/2$-dimensional subspace in which all states are magic states. The detailed proof is in appendix and we give a simple example as follows.
\begin{example}
Consider a qudit system with $d=5$. We have the following basis that spans $\cH_5$.
\begin{equation}
\begin{aligned}
&\ket{v_0} = \ket{0},\\
&\ket{v_1} = (\ket{1} + \ket{2} + \ket{3} + \ket{4})/2,\\
&\ket{v_2} = (-\ket{1} + \ket{2} + \ket{3} - \ket{4})/2,\\
&\ket{v_3} = (\ket{1} - \ket{2} + \ket{3} - \ket{4})/2,\\
&\ket{v_4} = (\ket{1} + \ket{2} - \ket{3} - \ket{4})/2.
\end{aligned}
\end{equation}
Let $\rho_0 = (\ketbra{v_0}{v_0} + \ketbra{v_1}{v_1} + \ketbra{v_2}{v_2})/3, \rho_1 = (\ketbra{v_3}{v_3} + \ketbra{v_4}{v_4})/2$, and $\cS_0 = \mathrm{supp}(\rho_0), \cS_1 = \mathrm{supp}(\rho_1)$. Followed by the idea in the proof of Proposition~\ref{prop:case_allmagic}, one can check that there is no PWF state in $\cS_1$, and $\rho_0$ is a PWF state. Thus, $\cS_0$ is a strongly PWF unextendible subspace. $\rho_0$ and $\rho_1$ cannot be unambiguously distinguished by PWF POVMs, no matter how many copies of them are supplied. 
\end{example}

More generally, we establish an easy-to-compute criterion for identifying the circumstances under which two quantum states cannot be unambiguously distinguished by PWF POVMs in the many-copy scenario. 
\begin{theorem}\label{thm:criterion_nm}
    Given $\rho_0,\rho_1\in \cD(\cH_d)$, if any of them has strictly positive discrete Wigner functions, i.e., $W_{\rho_i}(\mathbf{u}) >0, \forall \mathbf{u}$, then for any integer $n\in \mathbb{Z}^+$, $\rho_0^{\ox n}$ and $\rho_1^{\ox n}$ cannot be unambiguously distinguished by PWF POVMs.
\end{theorem}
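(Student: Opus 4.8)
The plan is to reduce the unambiguous discrimination of $\rho_0^{\ox n}$ and $\rho_1^{\ox n}$ to a single sign constraint on one POVM effect and then rule it out using strict positivity of the Wigner function. Without loss of generality I assume $\rho_0$ has strictly positive discrete Wigner function; the case in which $\rho_1$ does follows by exchanging the two labels.

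I would first record two standard facts. Because the phase-space point operators factorize, $A_{(\mathbf{u}_1,\dots,\mathbf{u}_n)} = A_{\mathbf{u}_1}\ox\cdots\ox A_{\mathbf{u}_n}$, the Wigner function is multiplicative, so $W_{\rho_0^{\ox n}}(\mathbf{u}_1,\dots,\mathbf{u}_n) = \prod_{i=1}^n W_{\rho_0}(\mathbf{u}_i)$; hence $\rho_0^{\ox n}$ inherits strict positivity, $W_{\rho_0^{\ox n}}(\mathbf{u}) > 0$ at every point $\mathbf{u}$. Moreover, since the $A_\mathbf{u}$ form a complete orthogonal operator basis with $\tr[A_\mathbf{u}A_\mathbf{v}] = d^n \delta_{\mathbf{u},\mathbf{v}}$, every state expands as $\rho = \sum_\mathbf{u} W_\rho(\mathbf{u}) A_\mathbf{u}$, which yields the overlap identity $\tr[E\rho] = \sum_\mathbf{u} W_\rho(\mathbf{u})\, W(E|\mathbf{u})$ for any effect $E$.

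Next I would invoke the definition of unambiguous discrimination. Any PWF POVM that distinguishes the two states must contain effects $E_0$ and $E_1$ certifying the outcomes ``$\rho_0$'' and ``$\rho_1$'' (alongside an inconclusive effect), subject to the no-error conditions $\tr[E_1\,\rho_0^{\ox n}] = 0$ and $\tr[E_0\,\rho_1^{\ox n}] = 0$; for the protocol to genuinely tell the states apart, each must be conclusively identifiable with nonzero probability, so both $E_0$ and $E_1$ must be nonzero. The first no-error condition alone is fatal. By the overlap identity, $\tr[E_1\,\rho_0^{\ox n}] = \sum_\mathbf{u} W_{\rho_0^{\ox n}}(\mathbf{u})\, W(E_1|\mathbf{u})$ is a sum whose first factor is strictly positive everywhere and whose second factor is nonnegative, since $E_1$ is a PWF effect. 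A vanishing sum of nonnegative terms forces each term to vanish, so $W(E_1|\mathbf{u}) = \tr[E_1 A_\mathbf{u}] = 0$ for all $\mathbf{u}$; as the $A_\mathbf{u}$ span the operator space, $E_1 = 0$. Thus no PWF POVM can ever conclusively report ``$\rho_1$'', and the two families are not unambiguously distinguishable for any $n$.

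I expect the only delicate point to be conceptual rather than computational: fixing the operational meaning of unambiguous discrimination so that $E_1 = 0$ genuinely certifies failure, that is, arguing that being able to detect only one of the two states does not amount to telling them apart. The two Wigner-function inputs are routine, and the whole argument then hinges on the sign structure of a single inner product, so once strict positivity of $W_{\rho_0^{\ox n}}$ is in hand the conclusion is immediate. I would also note, as a byproduct, that the same overlap computation shows a strictly positive Wigner function makes $\supp(\rho_0)$ strongly PWF unextendible, placing the statement within the scope of Lemma~\ref{lem:unext_to_strongly_unext}.
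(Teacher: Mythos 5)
Your proposal is correct and follows essentially the same route as the paper's proof: expand $\tr[E\,\rho^{\ox n}]$ in the phase-point basis, use strict positivity of the state's Wigner function together with nonnegativity of the PWF effect to force the effect's Wigner function---and hence the effect itself---to vanish, then propagate strict positivity to $n$ copies via multiplicativity of $W$ under tensor products. The only cosmetic differences are that you conclude $E=0$ from the $A_{\mathbf{u}}$ spanning the operator space while the paper uses $\sum_{\mathbf{u}}W(E|\mathbf{u})=d\tr(E)$ plus positive semidefiniteness, and that you phrase the final contradiction via the three-outcome formulation of unambiguous discrimination rather than the paper's two-outcome one.
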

Theorem~\ref{thm:criterion_nm} is of broad applicability for both pure and mixed states. The indistinguishability can be checked through a simple computation of the discrete Wigner functions, streamlining the conventional method by analyzing exponentially large Hilbert space.

%%%%%%%%%%%%%%%%%%%%%%%%%%%%%%%%%%%%%%%%%%%%%%%%%%%%%%%%%%%%%%%%%%%%%%%%%%%%%%%%%%%%%%%%%%%%%%%%%%%%%%%%%%%%%%%%%%%%%%%%%%%
\emph{Minimum error discrimination by PWF POVMs.---}
After characterizing the limits of PWF POVMs, we further study the minimum error QSD to unveil the capabilities inherent in PWF POVMs. For states $\rho_0$ and $\rho_1$ with prior probability $p$ and $1-p$, respectively, we denote $P_{\mathrm{e}}^{{\textup{\tiny PWF}}}(\rho_0, \rho_1, p)$ as the optimal error probability of distinguishing them by PWF POVMs. Mathematically, this optimal error probability can be expressed via semidefinite programming (SDP)~\cite{boyd2004convex} as follows. 
\begin{subequations}\label{SDP:min_error}
\begin{align}
P_{\mathrm{e}}^{{\textup{\tiny PWF}}} = \min_{E_0,E_1} &\; (1-p)\tr(E_0\rho_1) + p\tr(E_1\rho_0),\\
 {\rm s.t.} & \;\; E_0 \geq 0, E_1 \geq 0, E_0+E_1=\mathds{1}, \\
            & \;\; W(E_0 | \mathbf{u}) \geq 0, W(E_1 | \mathbf{u}) \geq 0, \forall \mathbf{u},\label{Eq:NMPOVM_const}
\end{align}
\end{subequations}
where Eq.~\eqref{Eq:NMPOVM_const} ensures $\{E_0, E_1\}$ is a PWF POVM. We provide the dual SDP in appendix. For $\rho_0$ to be the Strange state and $\rho_1$ to be its orthogonal complement, we demonstrate the following asymptotic error behavior.

\begin{proposition}\label{prop:s_state_err}
Let $\rho_0$ be the Strange state $\ketbra{\mathbb{S}}{\mathbb{S}}$ and $\rho_1 = (\mathds{1}-\ketbra{\mathbb{S}}{\mathbb{S}})/2$ be its orthogonal complement. For $n\in \mathbb{Z}^+$, we have 
\begin{equation}
    P_{\mathrm{e}}^{{\textup{\tiny PWF}}}(\rho_0^{\ox n}, \rho_1^{\ox n}, \frac{1}{2}) = \frac{1}{2^{n+1}}.
\end{equation}
The optimal PWF POVM is $\{E, \mathds{1}-E \}$, where
$E = (\ketbra{\mathbb{K}}{\mathbb{K}} + \ketbra{\mathbb{S}}{\mathbb{S}})^{\ox n}$ and $\ket{\mathbb{K}} = (\ket{1}+\ket{2})/\sqrt{2}$.
\end{proposition}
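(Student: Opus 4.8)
The plan is to prove the claimed value by matching bounds: the stated POVM gives the upper bound, and an explicit dual certificate for the semidefinite program \eqref{SDP:min_error} gives the matching lower bound. \emph{Achievability.} First I would note $P \coloneqq \ketbra{\mathbb{K}}{\mathbb{K}} + \ketbra{\mathbb{S}}{\mathbb{S}} = \mathds{1} - \ketbra{0}{0}$ is the projector onto $\mathrm{span}\{\ket{1},\ket{2}\}$, since $\ket{\mathbb{K}}\perp\ket{\mathbb{S}}$. As $\ket{0}$ is a stabilizer state, $W_{\ketbra{0}{0}}(\mathbf{u})\in\{0,1/d\}$, so $W(P|\mathbf{u}) = 1 - d\,W_{\ketbra{0}{0}}(\mathbf{u})\in\{0,1\}$; since Wigner functions factorize over tensor components, $W(E|\mathbf{u}) = \prod_i W(P|\mathbf{u}_i)\in\{0,1\}$ and $W(\mathds{1}-E|\mathbf{u}) = 1 - W(E|\mathbf{u})\in\{0,1\}$, so $\{E,\mathds{1}-E\}$ is a valid PWF POVM. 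Assigning the outcome $E$ to the Strange hypothesis, $\tr[(\mathds{1}-E)\rho_0^{\ox n}] = 1 - \langle\mathbb{S}|P|\mathbb{S}\rangle^{n} = 0$ and $\tr[E\rho_1^{\ox n}] = \tr[P\rho_1]^{n} = 2^{-n}$, so the error is $\tfrac{1}{2}\cdot 2^{-n} = 2^{-(n+1)}$.

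\emph{Reformulation and a weak-duality inequality.} For the lower bound I would substitute $E_1 = \mathds{1}-E_0$ to get
\[
P_{\mathrm{e}}^{{\textup{\tiny PWF}}} = \tfrac{1}{2} - \tfrac{1}{2}\max_{E_0}\tr[E_0\Delta],\qquad \Delta \coloneqq \rho_0^{\ox n} - \rho_1^{\ox n},
\]
the maximum over PWF effects, i.e.\ $0\preceq E_0\preceq\mathds{1}$ with $0\le W(E_0|\mathbf{u})\le 1$. The crucial tool is: if $\Delta = Y - Z + \sum_{\mathbf{u}}(x_{\mathbf{u}}-y_{\mathbf{u}})A_{\mathbf{u}}$ for some $Y,Z\succeq 0$ and $x_{\mathbf{u}},y_{\mathbf{u}}\ge 0$, then $\tr[E_0 Y]\le\tr Y$, $\tr[E_0 Z]\ge 0$, together with $0\le W(E_0|\mathbf{u})\le 1$, give $\tr[E_0\Delta]\le\tr Y + \sum_{\mathbf{u}}x_{\mathbf{u}}$ for every feasible $E_0$. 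It therefore suffices to exhibit one such decomposition of value $1-2^{-n}$.

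\emph{The certificate and the main obstacle.} I would take the tensor-power certificate $Y = (1-2^{-n})\ketbra{\mathbb{S}}{\mathbb{S}}^{\ox n}$, $Z=0$, $x_{\mathbf{u}}=0$, which forces $\sum_{\mathbf{u}}y_{\mathbf{u}}A_{\mathbf{u}} = 2^{-n}\big(\Pi_+^{\ox n} - \ketbra{\mathbb{S}}{\mathbb{S}}^{\ox n}\big)$ with $\Pi_+ \coloneqq \mathds{1}-\ketbra{\mathbb{S}}{\mathbb{S}} = 2\rho_1$; one checks this reproduces $\Delta$ and that $\tr Y + \sum_{\mathbf{u}}x_{\mathbf{u}} = 1-2^{-n}$ exactly, so the bound $\tr[E_0\Delta]\le 1-2^{-n}$ yields $P_{\mathrm{e}}^{{\textup{\tiny PWF}}}\ge 2^{-(n+1)}$. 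The one nontrivial step is $y_{\mathbf{u}}\ge 0$, equivalently $2^{n}W_{\rho_1^{\ox n}}(\mathbf{u})\ge W_{\rho_0^{\ox n}}(\mathbf{u})$ for all $\mathbf{u}$, i.e.\ that $\Pi_+^{\ox n}-\ketbra{\mathbb{S}}{\mathbb{S}}^{\ox n}$ is PWF. This is the technical heart, and I would settle it from the single-copy Wigner data ($d=3$): a direct computation gives $W_{\rho_0}(\mathbf{u}) = -1/3$ at the origin and $1/6$ elsewhere, and $W_{\rho_1}(\mathbf{u}) = 1/3$ at the origin and $1/12$ elsewhere, so pointwise $W_{\rho_0}(\mathbf{u}) = g(\mathbf{u})\,W_{\rho_1}(\mathbf{u})$ with $g=-1$ at the origin and $g=2$ off it. Taking products over the $n$ factors gives $W_{\rho_0^{\ox n}}(\mathbf{u}) = (-1)^{k}2^{\,n-k}W_{\rho_1^{\ox n}}(\mathbf{u})$, where $k$ counts the factors at the origin, so that
\[
2^{n}W_{\rho_1^{\ox n}}(\mathbf{u}) - W_{\rho_0^{\ox n}}(\mathbf{u}) = W_{\rho_1^{\ox n}}(\mathbf{u})\,2^{\,n-k}\big(2^{k}-(-1)^{k}\big)\ge 0,
\]
using $W_{\rho_1^{\ox n}}>0$ and $2^{k}\ge(-1)^{k}$. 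The obstacle is exactly this uniform-in-$n$ positivity: it is where the lone negative Wigner value of the magic state $\ket{\mathbb{S}}$ at the origin must be dominated across all copies at once, and the clean multiplicative structure above is what lets the single-copy sign pattern propagate to every $n$.
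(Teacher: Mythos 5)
Your proof is correct and takes essentially the same route as the paper's: the same achieving POVM $E=(\mathds{1}-\ketbra{0}{0})^{\ox n}$ for the upper bound, and a dual certificate that is exactly the paper's feasible point $V=\tfrac{2^n-1}{2^{n+1}}\rho_0^{\ox n}$, $U=0$, $b_{\mathbf{u}}=0$, $a_{\mathbf{u}}=y_{\mathbf{u}}/2$ in disguise (your identity $\Delta=Y-\sum_{\mathbf{u}}y_{\mathbf{u}}A_{\mathbf{u}}$ is the paper's dual constraint after multiplying by $\tfrac12$). The only difference is how nonnegativity of the dual coefficients is verified: you compute the Wigner functions pointwise via $W_{\rho_0}(\mathbf{u})=g(\mathbf{u})\,W_{\rho_1}(\mathbf{u})$ with $g\in\{-1,2\}$, whereas the paper expands $(\mathds{1}+A_{\mathbf{0}})^{\ox n}-(\mathds{1}-A_{\mathbf{0}})^{\ox n}$ in the $A_{\mathbf{u}}$ basis; both verifications are sound.
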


We remark what we obtain here is the optimal error probability using PWF POVMs to distinguish $n$ copies of the Strange state and its orthogonal complement. We first find the protocol above for the desired error probability and then utilize the dual SDP of~\eqref{SDP:min_error} to establish the optimality of this protocol. The detailed proof is provided in appendix. It can be seen that the optimal error probability will exponentially decay with respect to the number of copies supplied. Nevertheless, it is important to note that the error persists for all finite values of $n$, aligning with the indistinguishability established in Theorem~\ref{thm:main_result}.

We further discuss the relationship between Proposition~\ref{prop:s_state_err} and the Chernoff exponent in hypothesis testing. The celebrated quantum Chernoff theorem~\cite{Audenaert_2007,Audenaert_2008,hiai1991proper} establishes that $\xi_{C}(\rho_0,\rho_1) := \lim_{n\rightarrow \infty} - \frac{1}{n}\log P_e(\rho_0^{\ox n}, \rho_1^{\ox n}, p) = -\min_{0\leq s\leq 1}\log \tr[\rho_0^{1-s}\rho_1^{s}]$, where $P_e(\rho_0^{\ox n}, \rho_1^{\ox n}, p)$ is the average error of distinguishing $\rho_0$ and $\rho_1$ via global measurements, $\xi_{C}(\rho_0, \rho_1)$ is the so-called Chernoff exponent. The Chernoff exponent concerning a specific class of measurements, e.g., $\{\mathrm{LOCC}, \mathrm{PPT}, \mathrm{SEP}\}$, is defined in~\cite{cheng2023discrimination}. The authors proved that the Chernoff bounds in these cases are indeed faithful by showing an exponential decay of $P_e^{X}(\rho_0,\rho_1,p)$ where $X\in \{\mathrm{LOCC}, \mathrm{PPT}, \mathrm{SEP}\}$. Similarly, Proposition~\ref{prop:s_state_err} may give an insight that the Chernoff bound concerning PWF measurements is also faithful.

Proposition~\ref{prop:s_state_err} also implies applications in quantum data hiding~\cite{Terhal2001,DiVincenzo_2002,Lami_2018}. Despite the original data-hiding setting where pairs of states of a bipartite system are perfectly distinguishable via general entangled measurements yet almost indistinguishable under LOCC, it is conceivable to extend data-hiding techniques to broader contexts dictated by specific physical circumstances~\cite{Takagi_2019}. As discussed in~\cite{Takagi_2019}, one may consider the scenario that information is encoded in a way that Pauli measurements have less capability of decoding it than arbitrary measurements. Then only the party with the ability to generate magic can reliably retrieve the message. Here, we define $\|\cdot\|_{\tiny \text{PWF}}$ and $R(\text{PWF})$ as the \textit{distinguishability norm} and the \textit{data-hiding ratio}~\cite{Lami_2018} associated with PWF POVMs, respectively. Proposition~\ref{prop:s_state_err} directly gives a lower bound on the \textit{data-hiding ratio} against PWF POVMs as follows.
\begin{equation}
\label{eq:data_hiding_PWF}
    R(\text{PWF}) = \max \frac{\|p\rho-(1-p)\sigma\|_{\tiny \text{All}}}{\|p\rho-(1-p)\sigma\|_{\tiny \text{PWF}}} \geq \frac{1}{1-2^{-n}}.
\end{equation}
We also observe that a potential correlation between $R(\text{PWF})$ and the generalized robustness of measurement~\cite{Takagi_2019} merits further investigation, with preliminary evidence provided in appendix.

%%%%%%%%%%%%%%%%%%%%%%%%%%%%%%%%%%%%%%%%%%%%%%%%%%%%%%%%%%%%%%%%%%%%%%%%%%%
\emph{Distinctions between QRT of magic states and entanglement in QSD tasks.---}
The asymptotic limits of PWF POVMs share similarities with LOCC operations, both of which are considered free within their respective resource theories. Whereas, there are fundamental distinctions between the QRT of magic states and entanglement, considering the QSD tasks. In Table~\ref{tab:comparison_magic_entangle}, we display a comparison between the QRT of magic states and entanglement in QSD, including their similarities and the following distinctions. 

Recall that in entanglement theory, the UPB is an incomplete orthogonal product basis whose complementary subspace contains no product state~\cite{Bennett1999}. It shows examples of orthogonal product states that cannot be perfectly distinguished by LOCC operations. Correspondingly, we may imagine whether there is a similar `UPB' phenomenon in the QRT of magic states. That is if there is an incomplete orthogonal stabilizer basis whose complementary subspace contains no stabilizer state. We show that this is not the case as follows.

\begin{table}
    \centering
    \renewcommand\arraystretch{1.35}
    \footnotesize
    \begin{tabular}{ccc}
    % \topdoublerule 
    \hline\hline
    % \toprule[1pt]
    & \makecell[c]{QRT of\\ magic states} & \makecell[c]{QRT of\\ entanglement} \\
    \hline
    Asymptotic limits of free POVMs & \ding{52} & \ding{52} \\
    \hline 
    Existence of UPB phenomenon & \ding{55} & \ding{52}\\ 
    \hline \\[-10pt]
    \makecell[c]{Perfect discrimination with the\\ aid of one copy of maximal resource} & \ding{55} & \ding{52}\\
    \hline\hline
    % \bottomrule[1pt]
    % \bottomdoublerule  
    \end{tabular}
    \caption{\textbf{Comparison between the QRT of magic states and entanglement.} The second row represents if any resourceful pure state and its orthogonal complement are indistinguishable by free measurements in the many-copy scenario. The third row represents whether there is a UPB phenomenon. The last row represents whether the assistance of one copy of the maximally resourceful state is sufficient for perfect discrimination.}
    \label{tab:comparison_magic_entangle}
\end{table}

\begin{theorem}\label{thm:absence_UPB}
For a subspace $\cS\in \cH_d$, if $\cS$ has a set of basis $\{\ket{\psi_i}\}_{i=0}^{n-1}$ where every $\ket{\psi_i}$ is a stabilizer state, then $\cS$ is PWF extendible.
\end{theorem}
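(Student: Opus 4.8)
The plan is to prove extendibility constructively, by writing down a single PWF state supported on $\cS^\perp$ and checking its Wigner function is nonnegative. Throughout I read the hypothesis as in the UPB analogy motivating the theorem: $\{\ket{\psi_i}\}_{i=0}^{n-1}$ is an \emph{orthonormal} set of stabilizer states, so that $\Pi \coloneqq \sum_{i=0}^{n-1}\proj{\psi_i}$ is the orthogonal projector onto $\cS$ and $n=\dim\cS<d$. This orthonormality is essential rather than cosmetic: for a merely linearly independent stabilizer spanning set the claim already fails at $d=3$, since $\mathrm{span}\{\ket{0},(\ket{0}+\ket{1}+\ket{2})/\sqrt{3}\}$ has orthogonal complement $\mathbb{C}\ket{\mathbb{S}}$, which contains no PWF state. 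My candidate witness is the maximally mixed state on the complement, $\rho \coloneqq (\mathds{1}-\Pi)/(d-n)$, which is manifestly a density operator with $\supp(\rho)=\cS^\perp$; the whole task reduces to showing $W_\rho(\mathbf{u})\geq 0$ for every $\mathbf{u}$.

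The first key step is to convert orthogonality into a statement about Wigner supports. Using the reconstruction identity $\sigma=\sum_{\mathbf{u}}W_\sigma(\mathbf{u})A_{\mathbf{u}}$ together with $\tr[A_{\mathbf{u}}A_{\mathbf{v}}]=d\,\delta_{\mathbf{u}\mathbf{v}}$, one obtains the overlap formula $\tr[\proj{\psi_i}\proj{\psi_j}]=d\sum_{\mathbf{u}}W_{\psi_i}(\mathbf{u})W_{\psi_j}(\mathbf{u})$. Since every pure stabilizer state is a PWF state, each $W_{\psi_i}(\mathbf{u})\geq 0$, so for $i\neq j$ the vanishing overlap $\braket{\psi_i}{\psi_j}=0$ forces $W_{\psi_i}(\mathbf{u})W_{\psi_j}(\mathbf{u})=0$ for all $\mathbf{u}$; that is, the Wigner supports of the $\ket{\psi_i}$ are pairwise disjoint. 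In the single-qudit prime case this is just the familiar fact that mutually orthogonal stabilizer states sit on parallel, hence disjoint, phase-space lines, but phrasing it through the overlap formula sidesteps that geometry.

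The second step is a one-line Wigner computation. By linearity $W_\rho(\mathbf{u})=\frac{1}{d-n}\big(W_{\mathds{1}}(\mathbf{u})-\sum_i W_{\psi_i}(\mathbf{u})\big)$, where $W_{\mathds{1}}(\mathbf{u})=\frac1d\tr[A_{\mathbf{u}}]=\frac1d$. By the disjointness from Step~1, at each fixed $\mathbf{u}$ at most one summand $W_{\psi_i}(\mathbf{u})$ is nonzero, and the universal bound $W_{\rho'}(\mathbf{u})\leq 1/d$ valid for all states (the same bound from Ref.~\cite{Huangjun_2016} already invoked for Theorem~\ref{thm:main_result}) makes that single summand at most $1/d$. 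Hence $\sum_i W_{\psi_i}(\mathbf{u})\leq \frac1d=W_{\mathds{1}}(\mathbf{u})$, giving $W_\rho(\mathbf{u})\geq 0$ for every $\mathbf{u}$. Thus $\rho$ is a PWF state with $\supp(\rho)\subseteq\cS^\perp$, so $\cS$ is PWF extendible.

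All the ingredients (the reconstruction and overlap identities, $\tr A_{\mathbf{u}}=1$, the implication stabilizer $\Rightarrow$ PWF, and $W\le 1/d$) are standard and already present in the paper, so no step is computationally heavy. I expect the delicate point to be the reduction to pairwise-disjoint Wigner supports in the composite (non-prime) odd dimension, where the clean "single mutually unbiased basis" intuition degenerates but the overlap-formula argument still delivers disjointness uniformly; and, relatedly, making explicit that orthonormality of the stabilizer basis is precisely what excludes the Strange-state obstruction noted above, so that the maximally mixed complement state is a legitimate PWF witness.
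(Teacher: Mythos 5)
Your proof is correct and follows essentially the same route as the paper's: both take the (normalized) projector onto $\cS^\perp$ as the witness, derive pairwise-disjoint Wigner supports of the $\ket{\psi_i}$ from orthogonality plus nonnegativity of stabilizer Wigner functions, and conclude $\sum_i W_{\psi_i}(\mathbf{u})\leq 1/d$ pointwise (the paper uses the exact Hudson values $0$ or $1/d$ where you invoke the universal bound $W\leq 1/d$, a cosmetic difference). Your side remark that orthogonality of the basis is essential, with the $d=3$ counterexample whose complement is spanned by the Strange state, is a nice clarification of an assumption the paper leaves implicit.
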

A direct consequence of this theorem is that any set of orthogonal pure stabilizer states $\{\ket{\psi_i}\}_{i=0}^{n-1}$ can be unambiguously distinguished via PWF POVMs as we can choose $E_i = \ketbra{\psi_i}{\psi_i}$ for $i=0,1,\cdots,n-1$ and $E_{n} = \mathds{1}-\sum_{i=0}^{n-1}\ketbra{\psi_i}{\psi_i}$. Therefore, we confirm the absence of an analogous UPB phenomenon in the QRT of magic states.

%%%%%%%%%%%%%%%%%%%%%%%%%%%%%%%%%%%%%%%%%%%%%%%%%%%%%%%%%%%%%%%%%%%%%%%%%%%
Besides, it was shown that one copy of the Bell state is always sufficient for perfectly distinguishing any pure state $\rho_0$ and its orthogonal complement $\rho_1$ via PPT POVMs~\cite{Yu2014}, i.e., distinguishing $\rho_0\ox\Phi^+_2$ and $\rho_1\ox\Phi^+_2$. However, things are different in the QRT of magic states where we find the Strange state and its orthogonal complement cannot be perfectly distinguished by PWF POVMs with the assistance of one or two copies of any qutrit magic state.

\begin{proposition}\label{prop:magic_cost_sstate}
    Let $\rho_0$ be the Strange state $\ketbra{\mathbb{S}}{\mathbb{S}}$ and $\rho_1 = (\mathds{1}-\ketbra{\mathbb{S}}{\mathbb{S}})/2$ be its orthogonal complement. $\rho_0 \otimes \tau^{\otimes k}$ and $\rho_1 \otimes \tau^{\otimes k}$ cannot be perfectly distinguished by PWF POVMs for any qutrit magic state $\tau$ and $k=1$ or $2$.
\end{proposition}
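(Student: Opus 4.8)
The plan is to translate ``perfect discrimination'' into an existence statement about a single PWF effect and then split the analysis by the rank of $\tau$. Write $V_0=\supp(\rho_0\ox\tau^{\ox k})=\ket{\mathbb S}\ox\supp(\tau^{\ox k})$ and $V_1=\supp(\rho_1\ox\tau^{\ox k})=\{\mathbb S\}^\perp\ox\supp(\tau^{\ox k})$, which are orthogonal. A zero-error PWF POVM $\{E_0,E_1\}$ must satisfy $\tr[E_1(\rho_0\ox\tau^{\ox k})]=0$ and $\tr[E_0(\rho_1\ox\tau^{\ox k})]=0$; since every operator involved is positive semidefinite, this forces $E_0$ to act as the identity on $V_0$ and to annihilate $V_1$. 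Because $E_0\le\mathds1$, the eigenvalue-one eigenspace $V_0$ decouples from its complement, so $E_0=P_{V_0}+E_W$ with $0\le E_W\le P_W$ supported on $W:=(V_0\oplus V_1)^\perp=\cH_3\ox(\supp\tau^{\ox k})^\perp$. Thus the proposition is equivalent to showing that $P_{V_0}=\ketbra{\mathbb S}{\mathbb S}\ox P_{\supp\tau^{\ox k}}$ cannot be completed to a PWF effect by adding a positive operator living on $W$.

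The key tool is a reduced-effect identity: using $A_{\mathbf u}=A_{u_1}\ox A_{\mathbf v}$ and the factorisation of Wigner functions over tensor factors, for any PWF state $\mu$ on the ancilla the operator $R_\mu:=\tr_{\mathrm{anc}}[E_0(\mathds1\ox\mu)]$ obeys $W(R_\mu\,|\,u_1)=\sum_{\mathbf v}W_\mu(\mathbf v)\,W(E_0\,|\,u_1,\mathbf v)\ge0$, so $R_\mu$ is again a PWF effect on $\cH_3$. Now suppose $\supp(\tau^{\ox k})$ contains a PWF state $\mu$. Then $\tr[P_{\supp\tau^{\ox k}}\mu]=1$ while $\tr_{\mathrm{anc}}[E_W(\mathds1\ox\mu)]=0$ because $\supp\mu$ is orthogonal to the ancilla support of $E_W$; substituting $E_0=P_{V_0}+E_W$ yields $R_\mu=\ketbra{\mathbb S}{\mathbb S}$, contradicting that $R_\mu$ is PWF since the Strange state is magic. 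Hence it suffices to exclude such a $\mu$. When $\tau$ has rank two, $\supp\tau=\{\chi\}^\perp$ and $(\mathds1-\ketbra\chi\chi)/2$ is PWF with support exactly $\supp\tau$, so its $k$-fold tensor power is a PWF state on $(\supp\tau)^{\ox k}=\supp(\tau^{\ox k})$; when $\tau$ is full rank, $\mathds1/3^k$ serves the same role. Therefore this clean reduction already settles every magic $\tau$ of rank at least two (for every $k$), and the only remaining case is a pure magic state $\tau=\ketbra\phi\phi$.

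For pure $\tau$ the support $\supp(\tau^{\ox k})=\mathrm{span}\{\ket\phi^{\ox k}\}$ is one-dimensional and magic, so it contains no PWF state and the marginal argument breaks down. The marginal route is in fact genuinely lossy here: using $\ketbra{\mathbb S}{\mathbb S}=\tfrac12(\mathds1-A_{\mathbf0})$ and $W_\mu(\mathbf0)\ge0$ one finds $\max_{\mu\,\mathrm{PWF}}\langle\mathbb S|\mu|\mathbb S\rangle=\tfrac12$, which is too small to force any single reduced qutrit to contradiction. The main obstacle is thus to defeat this case using the full PWF constraint on $\cH_3^{\ox(k+1)}$. I would cast the completion problem as a finite SDP feasibility problem—variables $E_W$ on $W$, constraints $0\le E_W\le P_W$ and $W(P_{V_0}+E_W\,|\,\mathbf u)\ge0$ for all $\mathbf u$—and certify infeasibility by an explicit Farkas/dual witness. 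The witness should exploit the explicit Wigner function of $\ket{\mathbb S}$, whose effect Wigner function is negative only at the origin ($\langle\mathbb S|A_{\mathbf0}|\mathbb S\rangle=-1$) and therefore drives $W(P_{V_0}|\cdot)$ negative at the ``cross'' phase points $(\mathbf0,\mathbf v)$ and $(\mathbf u,\mathbf0)$; these are precisely the points that $E_W$ cannot repair once it is confined to $W=\cH_3\ox(\ket\phi^{\ox k})^\perp$.

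I expect this pure-state case to be the crux, and it is exactly where the restriction $k\le2$ enters: the dimension of $W$ and the number of Wigner constraints grow like $3^{k+1}$ and $9^{k+1}$, so the dual certificate need only be produced in the $9$- and $27$-dimensional spaces, and it must be shown valid for the whole Clifford orbit of pure magic states $\ket\phi$. For $k\ge3$ the larger negativity budget carried by $\ket\phi^{\ox k}$ plausibly permits the completion, so no such certificate exists and the statement is not claimed. Concretely, I would use Clifford covariance of the Wigner function to reduce the continuum of $\ket\phi$ to a low-parameter family, verify that the origin-type cross constraints remain obstructed for $k\in\{1,2\}$, and assemble these obstructions into the required witness.
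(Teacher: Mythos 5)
Your reduction of a zero-error PWF POVM to the form $E_0=P_{V_0}+E_W$ is correct, and your marginal argument --- that $R_\mu=\tr_{\mathrm{anc}}[E_0(\mathds{1}\ox\mu)]$ inherits positive Wigner functions from $E_0$ and a PWF $\mu$, hence cannot equal $\ketbra{\mathbb{S}}{\mathbb{S}}$ --- cleanly disposes of every $\tau$ of rank at least two (indeed for all $k$, which is more than the proposition asks; the mechanism is the same partial-trace trick the paper uses to prove Lemma~\ref{lem:unext_to_strongly_unext}). The problem is the case you defer: pure magic $\tau=\ketbra{\phi}{\phi}$ is not a corner case but the substance of the proposition (one or two copies of a pure resource such as $\ket{\mathbb{S}}$ or $\ket{\mathbb{N}}$), and for it you offer only a plan --- set up an SDP feasibility problem on $W$ and ``certify infeasibility by a Farkas/dual witness'' over the Clifford orbit of $\ket{\phi}$ --- with no witness constructed and no quantitative mechanism identified that fails at $k\le 2$ yet could succeed at $k\ge 3$. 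As written, the proof is incomplete exactly where the statement has content.

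The paper closes this gap with a single counting argument that needs no case split. Perfect discrimination forces $\tr_2[(\mathds{1}\ox\tau^{\ox k})E]=\rho_0$ --- your reduced-effect identity, but applied with $\mu=\tau^{\ox k}$ itself, which is not PWF; instead of an immediate positivity contradiction one evaluates the Wigner function of both sides at the phase point where $\rho_0$ is most negative and uses $0\le W(E|\mathbf{u})\le 1$ to obtain
\begin{equation*}
d\cdot\mathrm{maxneg}(\rho_0)\;\le\;\mathrm{sn}(\tau^{\ox k}),
\end{equation*}
where $\mathrm{sn}$ is the sum negativity. For the Strange state the left side equals $1$, while $\mathrm{sn}(\tau^{\ox k})=[(2\,\mathrm{sn}(\tau)+1)^k-1]/2\le[(5/3)^k-1]/2$ by the multiplicativity of $2\,\mathrm{sn}+1$ and the bound $\mathrm{sn}(\tau)\le 1/3$ for qutrits; this is $1/3$ for $k=1$ and $8/9$ for $k=2$, a contradiction, and it exceeds $1$ at $k=3$, which is precisely why the claim stops at two copies. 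This negativity budget is the quantity your dual witness would have to encode; I would adopt it directly rather than hunt for an SDP certificate parametrized over all pure magic states.
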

The main idea is to analyze the minimal \textit{mana}~\cite{Wang2018} $\tau^{\otimes k}$ must have to perfectly distinguish $\rho_0 \otimes \tau^{\otimes k}$ and $\rho_1 \otimes \tau^{\otimes k}$ by PWF POVMs. 
A similar result can be obtained for the Norell state $\ket{\mathbb{N}}:=(-\ket{0}+2\ket{1}-\ket{2})/\sqrt{6}$~\cite{Veitch2014}. Hence, we have witnessed the distinctions of the QRT of magic states and entanglement in regard to the resource cost for perfect discrimination.

%%%%%%%%%%%%%%%%%%%%%%%%%%%%%%%%%%%%%%%%%%%%%%%%%%%%%%%%%%%%%%%%%%%%%%%%%%%
%%%%%%%%%%%%%%%%%%%%%%%%%%%%%%%%%%%%%%%%%%%%%%%%%%%%%%%%%%%%%%%%%%%%%%%%%%%
\emph{Concluding remarks.---}
We have explored the limitations of PWF POVMs which can be efficiently classically simulated and strictly include all stabilizer measurements. Our results show that the QRT of magic states and entanglement exhibit significant similarities and distinctions in quantum state discrimination.

These results have implications in various fields, including connections between the QRT of magic states and quantum data hiding~\cite{Takagi_2019,Matthews2009,Lami_2018,DiVincenzo_2002}. It remains interesting to further study the limits of stabilizer measurements or classically-simulable ones in quantum channel discrimination~\cite{Pirandola_2019,Wang_2019,Takagi_2019a} and other operational tasks~\cite{Uola_2020, Ducuara_2020, Lipka_Bartosik_2021}. Note that as it is still open whether all operations with negative discrete Wigner functions are useful for magic state distillation~\cite{Veitch2014}, a comprehensive characterization of the quantum-classical boundary of measurements is still needed. Additionally, it is interesting to study the limitations of stabilizer measurements in a multi-qubit system~\cite{Howard2017,Bravyi2016,Seddon_2019,Hahn_2022}, and recent advances in generalized phase-space simulation methods for qubits~\cite{Zurel_2020, Raussendorf_2020} offer potential avenues to explore this, which we will leave to future work.

\emph{Acknowledgments.--} We would like to thank the anonymous referees for their helpful comments. This work was supported by the National Key R\&D Program of China (Grant No. 2024YFE0102500), the Start-up Fund (Grant No. G0101000151) from The Hong Kong University of Science and Technology (Guangzhou), the Guangdong Provincial Quantum Science Strategic Initiative (Grant No. GDZX2303007), the Education Bureau of Guangzhou Municipality, and the Guangdong Provincial Key Lab of Integrated Communication, Sensing and Computation for Ubiquitous Internet of Things (Grant No. 2023B1212010007).
%%%%%%%%%%%%%%%%%%%%%%%%%%%%%%%%%%%%%%%%%%%%%%%%%%%%%%%%%%%%%%%%%%%%%%%%%
% Bibliography
\bibliography{main}

%%%%%%%%% SUPPLEMENTAL MATERIAL %%%%%%%%%
\appendix
\setcounter{subsection}{0}
\setcounter{table}{0}
\setcounter{figure}{0}

\vspace{2cm}
\onecolumngrid
\vspace{2cm}

\begin{center}
\large{Supplemental Material}
% \large{Supplemental Material for:\\ \textbf{
% Limitations of Classically-Simulable Measurements for Quantum State Discrimination}}
\end{center}

\renewcommand{\theequation}{S\arabic{equation}}
% \numberwithin{equation}{section}
\renewcommand{\thesubsection}{\normalsize{Supplemental Note \arabic{subsection}}}
\renewcommand{\theproposition}{S\arabic{proposition}}
\renewcommand{\thedefinition}{S\arabic{definition}}
\renewcommand{\thefigure}{S\arabic{figure}}
\setcounter{equation}{0}
\setcounter{table}{0}
\setcounter{section}{0}
\setcounter{proposition}{0}
\setcounter{definition}{0}
\setcounter{figure}{0}

In this Supplemental Material, we provide detailed proofs of the theorems and propositions in the manuscript ``Limitations of Classically-Simulable Measurements for Quantum State Discrimination''. In Appendix~\ref{appendix:Wig_fun}, we cover the basics of the discrete Wigner function. In Appendix~\ref{appendix:asym_limits_nmPOVM}, we first present the detailed proofs for Lemma~\ref{lem:unext_to_strongly_unext} and Proposition~\ref{prop:case_allmagic}, which characterize the asymptotic limits of PWF POVMs for distinguishing a pure magic state and its orthogonal complement and a mixed magic state and its orthogonal complement, respectively. Then, we provide the proof of Theorem~\ref{thm:criterion_nm} which serves as an easy-to-compute criterion for when PWF POVMs cannot unambiguously distinguish two quantum states in the many-copy scenario. Appendix~\ref{appendix:dual_sdp_minerr} introduces the primal and dual SDP for calculating the optimal error probability of distinguishing two quantum states via PWF POVMs. We further provide detailed proof of Proposition~\ref{prop:s_state_err}. Then in Appendix~\ref{appendix:distinctions}, we furnish detailed proofs for Theorem~\ref{thm:absence_UPB} and Proposition~\ref{prop:magic_cost_sstate}, both of which characterize the distinctions between the QRT of magic states and entanglement in QSD tasks.

\section{The discrete Wigner function}\label{appendix:Wig_fun}
We denote $\cH_d$ as a Hilbert space of dimension $d$, and $\{ \ket{j} \}_{j = 0, \cdots, d-1}$ as the standard computational basis. Let $\cL(\cH_d)$ be the space of operators mapping $\cH_d$ to itself. For odd prime dimension $d$, the unitary boost and shift operators $X, Z \in \cL(\cH_d)$ are defined as~\cite{WWS19}: 
\begin{equation}
X\ket{j} = \ket{j \oplus 1},\quad Z\ket{j} = w^j \ket{j},
\end{equation}
where $w = e^{2\pi i/d}$ and $\oplus$ denotes addition modulo $d$. The \textit{discrete phase space} of a single $d$-level system is $\mathbb{Z}_d \times \mathbb{Z}_d$, which can be associated with a $d\times d$ cubic lattice. For a given point in the discrete phase space $\mathbf{u}=\left(a_1, a_2\right) \in \mathbb{Z}_d \times \mathbb{Z}_d$, the Heisenberg-Weyl operators are given by \begin{equation}
    T_{\mathbf{u}}=\tau^{-a_1 a_2} Z^{a_1} X^{a_2},
\end{equation}
where $\tau=e^{(d+1) \pi i / d}$. These operators form a group, the Heisenberg-Weyl group, and are the main ingredient for representing quantum systems in finite phase space. The case of non-prime dimension can be understood to be a tensor product of $T_\mathbf{u}$ with odd prime dimension. For each point $\mathbf{u} \in \mathbb{Z}_d \times \mathbb{Z}_d$ in the discrete phase space, there is a phase-space point operator $A_{\mathbf{u}}$ defined as 
\begin{equation}
A_\mathbf{0}:=\frac{1}{d} \sum_{\mathbf{w}} T_{\mathbf{w}},\quad A_{\mathbf{u}}:=T_{\mathbf{u}} A_\mathbf{0} T_{\mathbf{u}}^{\dagger}.    
\end{equation}
The discrete Wigner function of a state $\rho$ at the point  $\mathbf{u}$ is then defined as 
\begin{equation}
    W_\rho(\mathbf{u}):=\frac{1}{d} \tr\left[A_{\mathbf{u}} \rho\right].
\end{equation}
More generally, we can replace $\rho$ with $H$ for the discrete Wigner function of a Hermitian operator $H$. For the case of $H$ being an effect $E$ of some Positive Operator-Valued Measure (POVM), its discrete Wigner function is given by 
\begin{equation}
    W(E|\mathbf{u}) := \tr[EA_\mathbf{u}].
\end{equation} 
There are several useful properties of the set $\{A_{\bu}\}_{\bu}$ as follows:
\begin{enumerate}
    \item $A_{\bu}$ is Hermitian;
    \item $\sum_{\bu} A_{\bu}/d = \mathds{1}$;
    \item $\text{Tr}[A_{\bu}A_{{\bu}'}] = d \delta({\bu}, {\bu}')$;
    \item $\text{Tr}[A_{\bu}] = 1$;
    \item $H = \sum_{\bu} W_{H}(\bu)A_{\bu}$;
    \item $\{ A_{\bu}\}_{\bu} = \{ {A_{\bu}}^T\}_{\bu}$.  
\end{enumerate}
We say a Hermitian operator $H$ has positive discrete Wigner functions (PWFs) if $W_H(\mathbf{u}) \geq 0, \forall \mathbf{u} \in \mathbb{Z}_d \times \mathbb{Z}_d$. According to the discrete Hudson’s theorem~\cite{gross2006hudson}, a pure state $\rho$ is a stabilizer state, if and only if it has PWFs. Similarly, an $n$-valued POVM $\mathbf{E} = \{ E_j\}_{j=0}^{n-1}$ is said to be a PWF POVM if each $E_j$ has PWFs.

The discrete Wigner function of each measurement outcome of a POVM $\{E_j \}_{j=0}^{n-1}$ has a conditional quasi-probability interpretation over the phase space
\begin{equation}
     \sum_{j}W(E_j|\mathbf{u}) = 1,
\end{equation}
where $E_j \geq \mathbf{0}$ and $\sum_j E_j = \mathds{1}$. In the case of ${E_j}$ having PWFs, $W(E_j|\mathbf{u})$ can be interpreted as the probability of obtaining outcome $j$ given that the system is at the phase space point $\mathbf{u}$. This property is crucial for efficiently simulating quantum computation classically.
The total probability of obtaining outcome $j$ from a measurement on state $\rho$ is then given by
\begin{equation}
P(j|\rho) = \sum_\mathbf{u} W_{\rho}(\mathbf{u})W(E_j|\mathbf{u}),
\end{equation}
where $P(j|\rho)$ can be effectively estimated~\cite{Pashayan_2015, Mari2012} when both $\rho$ and $E_j$ have PWFs, implying that both $W_{\rho}(\mathbf{u})$ and $W(E_j|\mathbf{u})$ possess classical probability interpretations. Therefore, negative quasi-probability is a vital resource for quantum speedup in stabilizer computation and has deep connections with contextuality in stabilizer measurements~\cite{Veitch2012,Howard_2014}. In this sense, PWF POVMs are regarded as classically-simulable measurements~\cite{Howard_2014}, which strictly include all stabilizer measurements.

\begin{lemma}\label{lem:PSP_Op_Eigenvale_d}
    For any phase-space point operator $A_\mathbf{u} \in \cL(\cH_d)$, $ \mathbf{u} \in \mathbb{Z}_d \times \mathbb{Z}_d$, $A_\mathbf{u}$ is a unitary operator with eigenvalues of $+1$ or $-1$, where eigenvalue +1 has a degeneracy of $\frac{d+1}{2}$ and eigenvalue $-1$ has a degeneracy of $ \frac{d-1}{2}$.
\end{lemma}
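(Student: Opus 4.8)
The plan is to reduce everything to the single operator $A_{\mathbf 0}$ and to identify it explicitly as the parity (reflection) operator on $\cH_d$. Since $A_\mathbf{u} = T_\mathbf{u} A_{\mathbf 0} T_\mathbf{u}^\dg$ and $T_\mathbf{u}$ is unitary, $A_\mathbf{u}$ is unitarily equivalent to $A_{\mathbf 0}$, so it has the same spectrum and the same eigenvalue multiplicities. Thus it suffices to analyze $A_{\mathbf 0} = \frac{1}{d}\sum_\mathbf{w} T_\mathbf{w}$.

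First I would compute the action of $A_{\mathbf 0}$ on a basis vector $\ket{j}$. Writing $T_\mathbf{w} = \tau^{-w_1 w_2} Z^{w_1} X^{w_2}$ and using $X^{w_2}\ket{j} = \ket{j\oplus w_2}$ together with $Z^{w_1}\ket{j\oplus w_2} = w^{w_1(j+w_2)}\ket{j\oplus w_2}$, each term contributes a phase $\tau^{-w_1 w_2} w^{w_1(j+w_2)}$. The key algebraic facts for odd $d$ are $\tau^2 = w$ and $\tau^d = 1$, so that $\tau$ is a primitive $d$-th root of unity; substituting $w = \tau^2$ collapses the combined phase to $\tau^{w_1(w_2 + 2j)}$. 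Summing over $w_1 \in \mathbb{Z}_d$ then gives a character sum $\sum_{w_1}\tau^{w_1(w_2+2j)}$ that equals $d$ when $w_2 + 2j\equiv 0 \pmod d$ and vanishes otherwise, which forces $w_2 \equiv -2j \pmod d$ and leaves a single surviving term. This yields $A_{\mathbf 0}\ket{j} = \ket{(-j)\bmod d}$, i.e. $A_{\mathbf 0}$ is exactly the parity operator $P:\ket{j}\mapsto\ket{-j}$.

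From $A_{\mathbf 0} = P$ the remaining claims follow immediately. Since $P^2 = \mathds{1}$ and (by property~1) $A_{\mathbf 0}$ is Hermitian, $A_{\mathbf 0}$ is a Hermitian involution, hence unitary with all eigenvalues in $\{+1,-1\}$; the same holds for every $A_\mathbf{u}$ by the unitary equivalence above. For the multiplicities I would either count eigenvectors directly — $\ket{0}$ together with the $(d-1)/2$ symmetric vectors $\ket{j}+\ket{-j}$ span the $+1$ eigenspace, while the $(d-1)/2$ antisymmetric vectors $\ket{j}-\ket{-j}$ span the $-1$ eigenspace — or, more economically, combine $n_+ + n_- = d$ with $n_+ - n_- = \tr[A_{\mathbf 0}] = 1$ (property~4) to solve $n_+ = (d+1)/2$ and $n_- = (d-1)/2$. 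Either route gives the stated degeneracies.

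The main obstacle is the phase bookkeeping in the character-sum step: one must verify carefully that $\tau^2 = w$ and that $\tau$ is a primitive $d$-th root of unity, so that the Gauss-type sum $\sum_{w_1}\tau^{w_1 k}$ vanishes unless $k\equiv 0\pmod d$. The oddness of $d$ is essential throughout — it guarantees that $2$ is invertible modulo $d$ (so that $w_2 = -2j$ selects a genuine reflection $j\mapsto -j$ whose only fixed point is $j=0$) and that $(d\pm1)/2$ are integers, without which neither the clean $\pm1$ spectrum nor the half-integer degeneracy split would arise.
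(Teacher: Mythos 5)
Your proof is correct and follows essentially the same route as the paper's: both reduce to $A_{\mathbf 0}$ by unitary conjugation, identify $A_{\mathbf 0}=\sum_{k}\ketbra{k}{-k}$ as the parity operator, and extract the degeneracies from $\tr[A_{\mathbf 0}]=1$. The only difference is that you derive the parity identity from the definition via the character sum (and get the $\pm1$ spectrum from $P^2=\mathds{1}$), whereas the paper cites the identity and computes a characteristic polynomial — your version is a bit more self-contained but not a different argument.
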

\begin{proof}
Suppose $A_{\mathbf{0}}$ has a spectral decomposition
\begin{equation}
    A_{\mathbf{0}} = \sum_i a_i\ket{a_i}\bra{a_i}.
\end{equation}
Since $A_\mathbf{u} = T_\mathbf{u}A_{\mathbf{0}}T_\mathbf{u}^{\dagger}$, we have $A_\mathbf{u} = \sum_i a_i T_\mathbf{u}\ket{a_i}\bra{a_i}T_\mathbf{u}^{\dagger}$ which means all possible $A_\mathbf{u}$ have same eigenvalues $\{a_i\}$ with corresponding eigenvectors $\{T_\mathbf{u}\ket{a_i}\}$. Note that $A_{\mathbf{0}} =  \sum_{k \in \mathbb{Z}_d} \ketbra{k}{-k}$~\cite{emeriau2022quantum}, we conclude that the matrix representation of $A_{\mathbf{0}}$ is $A_{\mathbf{0}} = \begin{bmatrix} 1 &0\\0& {\sigma_x}^{\otimes \lfloor \frac{d}{2} \rfloor}
\end{bmatrix}$, where $\sigma_x = \begin{bmatrix} 0 &1\\1&0
\end{bmatrix}$.
Now we consider the eigenvalues of $A_{\mathbf{0}}$. Notice that 

\begin{equation}
\det(|aI_d-A_{\mathbf{0}}|) = \det\left(\left|aI_d-\begin{bmatrix} 1 &0\\0& {\sigma_x}^{\otimes \lfloor \frac{d}{2} \rfloor}
\end{bmatrix}\right|\right) = 
(1-a)\det\left|aI_{d-1}- {\sigma_x}^{\otimes \lfloor \frac{d}{2} \rfloor}\right| = (1-a)(1-a^2)^{\lfloor \frac{d}{2} \rfloor} = 0,
\end{equation}

where $a$ denotes the eigenvalue of $A_{\mathbf{0}}$. Thus, eigenvalues of $A_\mathbf{u}$ are $+1$ or $-1$, and eigenvalue $+1$ has a degeneracy of $\frac{d+1}{2}$ and eigenvalue $-1$ has a degeneracy of $ \frac{d-1}{2}$ due to $\tr(A_\mathbf{u}) = 1$. We can further conclude that $A_\mathbf{u}$ is unitary according to the possible eigenvalues of $A_\mathbf{u}$.
\end{proof}

%%%%%%%%%%%%%%%%%%%%%%%%%%%%%%%%%%%%%%%%%%%%%%%%%%%%%%%%%%%%%%%%%%%%%%%%%%%%%%%%%%%%%%%%%%%%%%%%%%%%%%%%%
\section{Asymptotic limits of PWF POVMs}\label{appendix:asym_limits_nmPOVM}
\renewcommand\theproposition{\ref{lem:unext_to_strongly_unext}}
\setcounter{proposition}{\arabic{proposition}-1}
\begin{lemma} 
For a PWF unextendible subspace $\cS \subseteq \cH_d$, if there is a PWF state $\rho \in \cD(\cS)$ such that $\mathrm{supp}(\rho) = \cS$, then $\cS$ is strongly PWF unextendible.
\end{lemma}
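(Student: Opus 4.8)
The plan is to prove the statement by induction on $n$ with a \emph{strengthened} hypothesis that is self-propagating: for every $n\in\mathbb{Z}^+$, the subspace $\cS^{\ox n}$ is simultaneously PWF unextendible and equal to $\supp(\rho^{\ox n})$ for the given PWF state $\rho$. The second half is immediate, since the discrete Wigner function factorizes over tensor products (as $A_{(\bu_1,\bu_2)}=A_{\bu_1}\ox A_{\bu_2}$), so $\rho^{\ox n}$ is PWF and $\supp(\rho^{\ox n})=\cS^{\ox n}$. The case $n=1$ is exactly the hypothesis of the lemma. The entire argument therefore reduces to a single amalgamation step: if subspaces $\cS_1,\cS_2$ are each PWF unextendible and each the support of a PWF state $\rho_1,\rho_2$, then $\cS_1\ox\cS_2$ is PWF unextendible. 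Applying this with $\cS_1=\cS^{\ox(n-1)}$ and $\cS_2=\cS$ advances the induction from $n-1$ to $n$.

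For the amalgamation step I would argue by contradiction, following the dichotomy already indicated in the main text. Suppose $\sigma$ is a PWF state with $\supp(\sigma)\subseteq(\cS_1\ox\cS_2)^\perp$. Because $\supp(\rho_1\ox\rho_2)=\cS_1\ox\cS_2$ is orthogonal to $\supp(\sigma)$ and both operators are positive, $\tr[\sigma(\rho_1\ox\rho_2)]=0$; rewriting with the partial-trace identity $\tr[(A\ox\mathds{1})M]=\tr[A\,\tr_2 M]$ turns this into $\tr[\rho_1\sigma']=0$, where $\sigma'\coloneqq\tr_2[\sigma(\mathds{1}\ox\rho_2)]$. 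The contradiction then comes from a case split on whether $\sigma'$ vanishes. If $\sigma'\neq 0$, I show $\sigma'$ is, after normalization, a PWF state supported in $\cS_1^\perp$, contradicting the PWF unextendibility of $\cS_1$. If $\sigma'=0$, I show instead that $\tr_1\sigma$ is a PWF state supported in $\cS_2^\perp$, contradicting the PWF unextendibility of $\cS_2$. This is precisely why the induction must carry PWF unextendibility for \emph{both} tensor factors, rather than merely ``admits a full-support PWF state''.

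The step I expect to demand the most care is verifying that these two partial-trace constructions stay positive semidefinite and retain nonnegative Wigner functions. For positivity I would expand $\rho_2=\sum_k p_k\proj{\phi_k}$ with $p_k>0$ and use $\tr_2[\sigma(\mathds{1}\ox\proj{\phi_k})]=\langle\phi_k|\sigma|\phi_k\rangle_2\geq 0$, so $\sigma'\geq 0$ as a positive combination; the same expansion shows that $\sigma'=0$ forces every $\langle\phi_k|\sigma|\phi_k\rangle_2=0$, whence $\supp(\sigma)\subseteq\cH\ox\cS_2^\perp$ and thus $\supp(\tr_1\sigma)\subseteq\cS_2^\perp$. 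For the Wigner functions I would combine $A_{(\bu_1,\bu_2)}=A_{\bu_1}\ox A_{\bu_2}$ with the expansions $\rho_2=\sum_{\bu_2}W_{\rho_2}(\bu_2)A_{\bu_2}$ and $\mathds{1}=\tfrac1d\sum_{\bu_1}A_{\bu_1}$ to obtain $W_{\sigma'}(\bu_1)=d\sum_{\bu_2}W_{\rho_2}(\bu_2)W_\sigma(\bu_1,\bu_2)$ and $W_{\tr_1\sigma}(\bu_2)=\sum_{\bu_1}W_\sigma(\bu_1,\bu_2)$, both manifestly nonnegative since $\sigma$ and $\rho_2$ are PWF. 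Finally, $\tr[\rho_1\sigma']=0$ with $\rho_1,\sigma'\geq 0$ gives $\rho_1\sigma'=0$, so $\supp(\sigma')\subseteq\ker\rho_1=\cS_1^\perp$, closing the $\sigma'\neq0$ branch. Altogether the only genuinely new ingredient beyond the listed properties of $\{A_{\bu}\}$ is the observation that partial-tracing against a PWF state is both positivity-preserving and PWF-preserving, which is what makes the inductive scheme go through.
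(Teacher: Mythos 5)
Your proposal is correct and follows essentially the same route as the paper's proof: the same contradiction via $\sigma'=\tr_2[\sigma(\mathds{1}\ox\rho_2)]$, the same dichotomy on whether $\sigma'$ vanishes, and the same observation that partial-tracing against a PWF state preserves both positivity and nonnegativity of the Wigner function. The only difference is presentational — you make explicit, as a strengthened induction hypothesis carrying unextendibility of both tensor factors, what the paper handles by proving the two-copy case and asserting the higher-copy cases follow ``by a similar technique.''
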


\begin{proof}
First, we will demonstrate that $\cS^{\ox 2}$ is PWF unextendible through a proof by contradiction. Suppose $\rho_s\in\cS$ is a PWF state such that $\mathrm{supp}(\rho_s) = \mathrm{supp}(\cS)$. If there is a PWF state $\sigma$ supporting on $(\cS^{\ox 2})^\perp$, then we have $\tr[\sigma(\rho_s\ox \rho_s)] = 0$ which leads to 
\begin{equation} 
    \tr\Big[\rho_s\tr_2[\sigma(\mathds{1}\ox \rho_s )] \Big] = 0.
\end{equation}
Now we construct an operator $\sigma'\in \cL(\cH_d)$ by \begin{equation}
    \sigma'=\tr_2[\sigma(\mathds{1}\ox \rho_s )].
\end{equation}
It is easy to check that $\sigma'$ is hermitian, $\sigma' \geq \mathbf{0}$ and $\tr(\sigma'\rho_s) = 0$.

If $\sigma' = \mathbf{0}$, we know that $\tr\sigma' = 0$ which indicates that $\tr[\sigma(\mathds{1}\ox \rho_s)] = \tr[\rho_s\tr_1\sigma] = 0$. We note that $\tr_1\sigma \neq \mathbf{0}$ otherwise $\sigma= \mathbf{0}$. Also, $\tr_1\sigma$ is PWF because partial trace preserves the positivity of the discrete Wigner functions which can be observed by expressing the state as $\sigma = \sum_{\bu} W_{\sigma}(\mathbf{u})A_{\bu}$. Thus, we will get a PWF state supporting on $\cS^\perp$ after normalizing $\tr_1\sigma$, a contradiction to the PWF unextendibility of $\cS$.

If $\sigma' \neq \mathbf{0}$, we can calculate the Wigner functions of $\sigma'$ and demonstrate their non-negativity as follows.
\begin{align}
\label{eq:new_POVM1}
W_{\sigma'}(\mathbf{u}_1) &= \frac{1}{d}\tr(\sigma' A_{\mathbf{u}_1})=\frac{1}{d^2}\sum_{\mathbf{u}_2}\tr[\sigma (A_{\mathbf{u}_1}\ox A_{\mathbf{u}_2})] \tr(\rho_s A_{\mathbf{u}_{2}}).
\end{align}
Since $\sigma$ and $\rho_s$ are PWF, i.e., $\tr[\sigma (A_{\mathbf{u}_1}\ox A_{\mathbf{u}_2})]\geq 0, \tr(\rho_s A_{\mathbf{u}_{2}}) \geq 0, \forall \mathbf{u}_1, \mathbf{u}_2$, we have $W_{\sigma'}(\mathbf{u}_1) \geq 0$. Thus $\sigma'$ is PWF. Consequently, we have obtained a PWF state supporting on $\cS^\perp$ after normalizing $\sigma'$, a contradiction to the PWF unextendibility of $\cS$.

Hence, we conclude that $\cS^{\ox 2}$ is PWF unextendible.
Using a similar technique, we can prove that $\cS^{\ox 3}$ is PWF unextendible by making a contradiction to the PWF unextendibility of $\cS^{\ox 2}$. In turn, we can conclusively demonstrate that $\cS^{\ox k}$ is PWF unextendible for any positive integer $k$, which completes the proof. 
\end{proof}

%%%%%%%%%%%%%%%%%%%%%%%%%%%%%%%%%%%%%%%%%%%%%%%%%%%%%%%%%%%%%%%%%%%%%%%%%%%
\renewcommand\theproposition{\ref{prop:case_allmagic}}
\setcounter{proposition}{\arabic{proposition}-1}
\begin{proposition}
    There exists a strongly PWF unextendible subspace $\cS\subseteq \cH_d$ of dimension $(d+1)/2$. 
\end{proposition}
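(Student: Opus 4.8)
The plan is to exhibit an explicit subspace and then feed it into Lemma~\ref{lem:unext_to_strongly_unext}. The natural candidate is the $+1$ eigenspace of the central phase-space point operator $A_\mathbf{0}$. By Lemma~\ref{lem:PSP_Op_Eigenvale_d}, $A_\mathbf{0}$ is both Hermitian and unitary with eigenvalues $\pm 1$, hence an involution, and its $+1$ eigenspace has dimension exactly $(d+1)/2$. So I would set $\cS := \ker(A_\mathbf{0}-\mathds{1})$, which then has the required dimension, with $\cS^\perp = \ker(A_\mathbf{0}+\mathds{1})$ of dimension $(d-1)/2$.

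First I would show $\cS$ is PWF unextendible. Suppose toward a contradiction that $\sigma\in\cD(\cH_d)$ is a PWF state with $\supp(\sigma)\subseteq\cS^\perp$. Writing $P_-$ for the projector onto $\cS^\perp$, we have $\sigma=P_-\sigma P_-$ and $P_-A_\mathbf{0}P_-=-P_-$, so that $\tr[A_\mathbf{0}\sigma]=-\tr\sigma=-1$. Hence $W_\sigma(\mathbf{0})=\tfrac{1}{d}\tr[A_\mathbf{0}\sigma]=-\tfrac1d<0$, contradicting that $\sigma$ is PWF. Thus no PWF state is supported on $\cS^\perp$, and $\cS$ is PWF unextendible.

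Next I would produce a full-support PWF state on $\cS$. The natural choice is the maximally mixed state on $\cS$, namely $\rho=\tfrac{2}{d+1}P_+=\tfrac{1}{d+1}(\mathds{1}+A_\mathbf{0})$, where $P_+=\tfrac{1}{2}(\mathds{1}+A_\mathbf{0})$ projects onto $\cS$. Using the properties $\tr[A_\mathbf{u}]=1$ and $\tr[A_\mathbf{u}A_{\mathbf{u}'}]=d\,\delta(\mathbf{u},\mathbf{u}')$ stated in the excerpt, a one-line computation gives
\begin{equation}
W_\rho(\mathbf{u})=\frac{1}{d(d+1)}\big(1+d\,\delta(\mathbf{u},\mathbf{0})\big),
\end{equation}
which equals $1/d$ at $\mathbf{u}=\mathbf{0}$ and $1/(d(d+1))$ otherwise, hence is strictly positive for all $\mathbf{u}$. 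So $\rho$ is a PWF state with $\supp(\rho)=\cS$.

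With both ingredients in hand, Lemma~\ref{lem:unext_to_strongly_unext} immediately upgrades $\cS$ to a strongly PWF unextendible subspace, completing the proof. I do not expect a genuine obstacle: the construction is essentially forced once one reads off the eigenstructure of $A_\mathbf{0}$ from Lemma~\ref{lem:PSP_Op_Eigenvale_d}, and the only point requiring care is that $\rho$ has support equal to \emph{all} of $\cS$ rather than a proper subspace, so that Lemma~\ref{lem:unext_to_strongly_unext} applies — this is automatic since $\rho$ is proportional to the projector $P_+$. As a sanity check, the $d=5$ instance in the Example is recovered by noting that $\ket{v_0},\ket{v_1},\ket{v_2}$ span the $+1$ eigenspace of $A_\mathbf{0}$ while $\ket{v_3},\ket{v_4}$ span its $-1$ eigenspace, and $\rho_0$ there is exactly the maximally mixed state $P_+/3$.
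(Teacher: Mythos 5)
Your proof is correct and follows essentially the same route as the paper's: both take $\cS$ to be the $+1$ eigenspace of $A_\mathbf{0}$, observe that any state supported on the $-1$ eigenspace has Wigner function $-1/d$ at the origin, exhibit the normalized projector $\tfrac{1}{d+1}(\mathds{1}+A_\mathbf{0})$ as a full-support PWF state on $\cS$, and invoke Lemma~\ref{lem:unext_to_strongly_unext}. Your computation $W_\rho(\mathbf{u})=\tfrac{1}{d(d+1)}(1+d\,\delta(\mathbf{u},\mathbf{0}))$ is the correct normalization of the paper's corresponding expression, and the identification with the $d=5$ example is accurate.
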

\begin{proof}
First, we construct a $(d-1)/2$ dimensional subspace  $\cS_m \subseteq \cH_d$ that supports only magic states. Then we will show that $\cS_m^{\perp} \subseteq \cH_d$ is a strongly PWF unextendible subspace of dimension $(d+1)/2$.
We consider the eigenspace of the phase-space point operator $A_{\mathbf{0}}$. Denote the set of all eigenvectors of $A_{\mathbf{0}}$ corresponding to eigenvalue of $-1$ as $S^-\coloneqq  \{\ket{a_i^-}\}_{i=1}^{\frac{d-1}{2}}$. We will show these states in $S^-$ span a subspace $\cS_m \in \mathcal{H}_d$ that contains no PWF states.

Obviously, any $\ket{a_i^-} \in S^-$ is a magic state due to $\tr(\ket{a_i^-}\bra{a_i^-}A_{\mathbf{0}}) = -1$. Suppose $\ket{\psi}$ is an arbitrary pure state in $S^-$. It can be written as $\ket{\psi} = \sum_i \alpha_i \ket{a_i^-}.$ The Wigner function of $\ket{\psi}$ at the phase-space point $\mathbf{0}$ is 
\begin{align}
  W_{\psi}(\mathbf{0}) &= \frac{1}{d}\bra{\psi}A_{\mathbf{0}}\ket{\psi} = \frac{1}{d} \sum_{i,j} \alpha_i^* \alpha_j \bra{a_i^-}A_{\mathbf{0}}\ket{a_j^-} = -\frac{1}{d} \sum_{i,j} \alpha_i^* \alpha_j \braket{a_i^-}{a_j^-} = -\frac{1}{d} \sum_i \alpha_i \alpha_i^*= -\frac{1}{d},
\end{align}
which tells $\ket{\psi}$ is a magic state. For any mixed state $\rho = \sum_i p_i \ketbra{\psi_i}{\psi_i}$ on $ \mathcal{S}_m$, we have
\begin{equation}
W_{\rho}(\mathbf{0}) = \sum_i p_i W_{\psi_i}(\mathbf{0})= -\frac{1}{d} \sum_i p_i= -\frac{1}{d}.
\end{equation}
Thus, we construct a $(d-1)/2$ dimensional subspace  $\cS_m$ that contains no PWF states. Obviously, $\cS_m^{\perp}$ is a PWF unextendible subspace of dimension $(d+1)/2$. $\cS_m^{\perp}$ is spanned by the set of all eigenvectors of $A_{\mathbf{0}}$ corresponding to eigenvalue of $+1$, denoted as $S^+\coloneqq  \{\ket{a_i^+}\}_{i=1}^{\frac{d+1}{2}}$. We show that $\rho_n = \frac{2}{d+1}\sum_j \ketbra{a_j^+}{a_j^+}$ is a PWF state on $\cS_m^{\perp}$ as follows: 
\begin{subequations}
    \begin{align}
        W_{\rho_n}(\mathbf{u}) &= \frac{1}{d} \tr(A_\mathbf{u}\rho_n) \\
        & = \frac{2}{(d+1)d} \tr(A_\mathbf{u}\sum_j \ketbra{a_j^+}{a_j^+} )  \label{eq:s16b}\\
        & = \frac{2}{(d+1)d} \tr[A_\mathbf{u}(I+A_\mathbf{0})/2] \label{eq:s16c}\\
        & =  \frac{1+ \delta_{\mathbf{u},\mathbf{0}}}{(d+1)} > 0.
    \end{align}
\end{subequations}
From Eq.~\eqref{eq:s16b} to Eq.~\eqref{eq:s16c}, we use the properties that $A_{\mathbf{0}} = \sum_j \ketbra{a_j^+}{a_j^+} - \sum_{i} \ketbra{a_i^-}{a_i^-}$ with spectral decomposition, and $I_d = \sum_j \ketbra{a_j^+}{a_j^+} + \sum_{i} \ketbra{a_i^-}{a_i^-}$.
Note that $\mathrm{supp}(\rho_n) = \cS_m^{\perp}$, combined with Lemma~\ref{lem:unext_to_strongly_unext}, we can conclude that $\cS_m^{\perp} \subseteq \cH_d$ is a strongly PWF unextendible subspace of dimension $(d+1)/2$.
\end{proof}

%%%%%%%%%%%%%%%%%%%%%%%%%%%%%%%%%%%%%%%%%%%%%%%%%%%%%%%%%%%%%%%%%%%%%%%%%%%
\renewcommand\theproposition{\ref{thm:criterion_nm}}
\setcounter{proposition}{\arabic{proposition}-1}
\begin{theorem}
    Given $\rho_0,\rho_1\in \cD(\cH_d)$, if any of them has strictly positive discrete Wigner functions, i.e., $W_{\rho_i}(\mathbf{u}) >0, \forall \mathbf{u}$, then for any integer $n\in \mathbb{Z}^+$, $\rho_0^{\ox n}$ and $\rho_1^{\ox n}$ cannot be unambiguously distinguished by PWF POVMs.
\end{theorem}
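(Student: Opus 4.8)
The plan is to reduce the statement to a single positivity observation about Wigner functions, exploiting the strict-positivity hypothesis on one of the two states. Without loss of generality I would take $\rho_0$ to be the state with $W_{\rho_0}(\mathbf{u}) > 0$ for all $\mathbf{u}$. The first step is to note that strict positivity is inherited by tensor powers: since the phase-space point operators on $\cH_d^{\ox n}$ factorize as $A_{(\mathbf{u}_1,\dots,\mathbf{u}_n)} = A_{\mathbf{u}_1}\ox\cdots\ox A_{\mathbf{u}_n}$, the Wigner function also factorizes, $W_{\rho_0^{\ox n}}(\mathbf{u}_1,\dots,\mathbf{u}_n) = \prod_{k=1}^n W_{\rho_0}(\mathbf{u}_k)$, so $W_{\rho_0^{\ox n}}(\mathbf{u}) > 0$ at every phase-space point $\mathbf{u}$.

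Next I would spell out what unambiguous discrimination by a PWF POVM demands: a POVM $\{E_0,E_1,E_?\}$ with all effects PWF, satisfying the no-error conditions $\tr[E_0\rho_1^{\ox n}] = 0$ and $\tr[E_1\rho_0^{\ox n}] = 0$, together with a nonzero probability of conclusively identifying each state. The heart of the argument is to expand the no-error condition for the effect $E_1$ that identifies $\rho_1^{\ox n}$ using property~5 of the phase-space point operators, $\rho_0^{\ox n} = \sum_\mathbf{u} W_{\rho_0^{\ox n}}(\mathbf{u})A_\mathbf{u}$, which gives
\[
\tr[E_1\rho_0^{\ox n}] = \sum_\mathbf{u} W_{\rho_0^{\ox n}}(\mathbf{u})\,W(E_1|\mathbf{u}).
\]
Every summand is a product of the strictly positive factor $W_{\rho_0^{\ox n}}(\mathbf{u})$ and the non-negative factor $W(E_1|\mathbf{u}) = \tr[E_1 A_\mathbf{u}] \geq 0$, the latter because $E_1$ is a PWF effect. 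Hence the sum can vanish only if $W(E_1|\mathbf{u}) = 0$ for every $\mathbf{u}$; since $W(E_1|\mathbf{u}) = d^n W_{E_1}(\mathbf{u})$ and $E_1 = \sum_\mathbf{u} W_{E_1}(\mathbf{u})A_\mathbf{u}$, this forces $E_1 = 0$. Thus $\rho_1^{\ox n}$ can never be conclusively identified, so no valid unambiguous discrimination exists. If instead $\rho_1$ is the strictly positive state, the identical argument forces $E_0 = 0$.

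I expect the only delicate point—more a matter of definition than a genuine obstacle—to be fixing the notion of unambiguous discrimination precisely enough that the conclusion $E_1 = 0$ already defeats it: one must make explicit that identifying $\rho_1^{\ox n}$ with positive probability is part of the task, so that being forced to set $E_1 = 0$ is a real failure rather than a permitted degenerate case. The Wigner-function factorization and the reconstruction formula are routine consequences of the properties listed for $\{A_\mathbf{u}\}_\mathbf{u}$, so the entire argument rests on the elementary fact that a sum of (strictly positive)$\times$(non-negative) terms vanishes only when all the non-negative factors do. Notably this route is self-contained and bypasses the unextendibility machinery of Lemma~\ref{lem:unext_to_strongly_unext}, which is what makes the criterion ``easy to compute.''
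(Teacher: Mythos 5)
Your argument is correct and is essentially the paper's own proof: both expand $\tr[E\,\rho^{\ox n}]=\sum_{\mathbf u}W(E|\mathbf u)\,W_{\rho^{\ox n}}(\mathbf u)$, use strict positivity of the state's Wigner function to force $W(E|\mathbf u)=0$ for all $\mathbf u$ and hence $E=0$, and lift to $n$ copies via multiplicativity of the Wigner function under tensor products. The only (harmless) differences are that you phrase unambiguous discrimination with an explicit inconclusive outcome $E_?$ and kill the effect via the reconstruction formula, whereas the paper works with a two-outcome POVM and concludes $E=0$ from $\tr(E)=0$ together with $E\geq 0$.
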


\begin{proof}
    Suppose the state $\rho_0$ and $\rho_1$ can be unambiguously distinguished by a PWF POVM $\{E_0, E_1\}$. By definition, we have
\begin{equation}\label{Eq:cond_perf}
    \tr(E_0\rho_1) = 0 \text{ and } \tr(E_1\rho_0) = 0.
\end{equation}
Then we are going to establish the theorem using a proof by contradiction. Without loss of generality, we suppose $\rho_1$ has strictly positive Wigner functions. Notice that
\begin{equation}\label{Eq:trErho}
    \tr(E_0\rho_1) = \sum_{\mathbf{u}}W(E_0|\mathbf{u}) W_{\rho_1}(\mathbf{u}) = 0.
\end{equation}
By the strictly positivity of the Wigner functions of $\rho_1$, i.e., $W_{\rho_1}(\mathbf{u})>0, \forall \mathbf{u}$, we have that Eq.~\eqref{Eq:trErho} holds if and only if $W(E_0|\mathbf{u}) =0, \forall \mathbf{u}$. Combining the fact that $\sum_{\mathbf{u}}W(E_0|\mathbf{u}) = d\tr(E_0)$, we have $\tr(E_0) = 0$. It follows that all eigenvalues of $E_0$ are equal to zero since $E_0\geq 0$. Then we have $E_0 = \mathbf{0}$ which gives $\tr(E_1\rho_0) = \tr(\mathds{1}\rho_0) = 1$, a contradiction. 

Hence, there is no effect $E_0$ having PWFs such that $\tr(E_0\rho_0) > 0$ and $\tr(E_0\rho_1)=0$ if $\rho_1$ has strictly positive Wigner functions. Similarly, we can show there is no effect $E_1$ having PWFs such that $\tr(E_1\rho_0) = 0$ and $\tr(E_1\rho_1)>0$ if $\rho_0$ has strictly positive Wigner functions. Using the fact that 
\begin{equation}
W_{\rho^{\ox 2}}(\mathbf{u}_i \oplus \mathbf{u}_j) = W_{\rho}(\mathbf{u}_i)W_{\rho}(\mathbf{u}_j), \quad \forall \mathbf{u}_i, \mathbf{u}_j \in \mathbb{Z}_d \times \mathbb{Z}_d,
\end{equation}
we complete the proof.
\end{proof}
\renewcommand{\theproposition}{S\arabic{proposition}}

%%%%%%%%%%%%%%%%%%%%%%%%%%%%%%%%%%%%%%%%%%%%%%%%%%%%%%%%%%%%%%%%%%%%%%%%%%%
\section{Minimum error discrimination by PWF POVMs}\label{appendix:dual_sdp_minerr}

Note that given a two-valued PWF POVM $\{E, \mathds{1}-E \}$, the discrete Wigner function of an effect $E$ is $W(E|\mathbf{u}) = \tr(EA_\mathbf{u})$. The SDP of discriminating an equiprobable pair of states $\{\rho_0, \rho_1\}$ via PWF POVMs can be written as 
\begin{equation}\label{Eq:err_sdp}
\begin{aligned}
P_{\mathrm{e}}^{{\textup{\tiny PWF}}}(\rho_0, \rho_1, \frac{1}{2}) = \min_{E} &\; \frac{1}{2} + \frac{1}{2}\tr[E(\rho_1 - \rho_0)],\\
     {\rm s.t.} & \;\; 0\leq E \leq \mathds{1}, \\ 
                & \;\; 0\leq \tr[E A_\mathbf{u}] \leq 1, \forall \ \mathbf{u},
\end{aligned}
\end{equation}
where $E \leq \mathds{1}$ implies $\mathds{1} - E$ is positive semidefinite. For different linear inequality constraints, we introduce corresponding dual variables $V$, $U$, $a_\mathbf{u}, b_{\mathbf{u}} \geq 0$. Then the Lagrange function of the primal problem can be written as
\begin{equation}
\begin{aligned}
L(E,V,U,a_\mathbf{u},b_\mathbf{u}) &= \frac{1}{2} + \frac{1}{2}\tr[E(\rho_1-\rho_0)] + \tr[V(E-\mathds{1})]- \tr(UE) \\
&\quad- \sum_{\mathbf{u}} a_{\mathbf{u}} \tr(E A_{\mathbf{u}}) + \sum_{\mathbf{u}} b_{\mathbf{u}} [\tr(EA_{\mathbf{u}})-1]\\
&= \frac{1}{2} + \tr\Big[E\Big(V- U+ \frac{1}{2}(\rho_1-\rho_0) - \sum_{\mathbf{u}} a_{\mathbf{u}}A_{\mathbf{u}} + \sum_{\mathbf{u}} b_{\mathbf{u}}A_{\mathbf{u}}\Big) \Big] - \tr(V)- \sum_{\mathbf{u}} b_{\mathbf{u}}
\end{aligned}
\end{equation}
The corresponding Lagrange dual function is
\begin{equation}
    g(V,U,a_\mathbf{u},b_\mathbf{u}) = \inf_{E} L(E,V,U,a_\mathbf{u},b_\mathbf{u}).
\end{equation}
We can see that $V- U+ \frac{1}{2}(\rho_1-\rho_0) - \sum_{\mathbf{u}} a_{\mathbf{u}}A_{\mathbf{u}} + \sum_{\mathbf{u}} b_{\mathbf{u}}A_{\mathbf{u}} \geq 0$, otherwise $g(V,U,a_\mathbf{u},b_\mathbf{u})$ is unbounded. Thus the dual SDP is
\begin{equation}\label{Eq:dual_sdp_err}
\begin{aligned}
\max_{V,U,a_{\mathbf{u}},b_{\mathbf{u}}} &\; \frac{1}{2}-\tr(V)- \sum_{\mathbf{u}} b_{\mathbf{u}},\\
    {\rm s.t.} & \;\; U\geq 0, V\geq 0,\\
    & \;\; V - U + \frac{1}{2}(\rho_1-\rho_0) \geq  \sum_{\mathbf{u}} (a_{\mathbf{u}}-b_{\mathbf{u}}) A_{\mathbf{u}}, \\ 
    & \;\; a_{\mathbf{u}}\geq 0, b_{\mathbf{u}}\geq 0, \quad \forall \ \mathbf{u}.
\end{aligned}
\end{equation}

% %%%%%%%%%%%%%%%%%%%%%%%%%%%%%%%%%%%%%%%%%%%%%%%%%%%%%%%%%%%%%%%%%%%%%%%%%%%%
\renewcommand\theproposition{\ref{prop:s_state_err}}
\setcounter{proposition}{\arabic{proposition}-1}
\begin{proposition}
Let $\rho_0$ be the Strange state $\ketbra{\mathbb{S}}{\mathbb{S}}$ and $\rho_1 = (\mathds{1}-\ketbra{\mathbb{S}}{\mathbb{S}})/2$ be its orthogonal complement. For $n\in \mathbb{Z}^+$, we have 
\begin{equation}
    P_{\mathrm{e}}^{{\textup{\tiny PWF}}}(\rho_0^{\ox n}, \rho_1^{\ox n}, \frac{1}{2}) = \frac{1}{2^{n+1}}.
\end{equation}
The optimal PWF POVM is $\{E, \mathds{1}-E \}$, where
$E = (\ketbra{\mathbb{K}}{\mathbb{K}} + \ketbra{\mathbb{S}}{\mathbb{S}})^{\ox n}$ and $\ket{\mathbb{K}} = (\ket{1}+\ket{2})/\sqrt{2}$.
\end{proposition}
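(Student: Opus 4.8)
The plan is to prove matching upper and lower bounds: exhibit the stated measurement and check it is a legitimate PWF POVM attaining error $2^{-(n+1)}$, and then certify that no PWF POVM does better by producing an explicit feasible point of the dual SDP~\eqref{Eq:dual_sdp_err} with the same value, so that weak duality pins the optimum.

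For achievability, first I would note that $F:=\ketbra{\mathbb{K}}{\mathbb{K}}+\ketbra{\mathbb{S}}{\mathbb{S}}=\ketbra{1}{1}+\ketbra{2}{2}=\mathds{1}-\ketbra{0}{0}$, so $W(F|\mathbf{u})=1-\tr(\ketbra{0}{0}A_\mathbf{u})$. Because $\ket{0}$ is a stabilizer state its Wigner function is the indicator of a line, whence $\tr(\ketbra{0}{0}A_\mathbf{u})\in\{0,1\}$ and $W(F|\mathbf{u})\in\{0,1\}$. The product rule $W(F^{\ox n}|\mathbf{u})=\prod_i W(F|\mathbf{u}_i)$ then shows that both $E:=F^{\ox n}$ and $\mathds{1}-E$ have Wigner functions valued in $\{0,1\}$, so $\{E,\mathds{1}-E\}$ is a genuine PWF POVM. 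The error is a short computation: with $\rho_0=\ketbra{\mathbb{S}}{\mathbb{S}}$ and $\rho_1=\tfrac12(\mathds{1}-\ketbra{\mathbb{S}}{\mathbb{S}})$ one has $\tr(E\rho_0^{\ox n})=(\bra{\mathbb{S}}F\ket{\mathbb{S}})^{n}=1$ and $\tr(E\rho_1^{\ox n})=(\tr(F\rho_1))^{n}=2^{-n}$, giving $P_{\mathrm e}=\tfrac12\tr(E\rho_1^{\ox n})+\tfrac12\tr((\mathds{1}-E)\rho_0^{\ox n})=2^{-(n+1)}$.

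For optimality I would turn to the dual~\eqref{Eq:dual_sdp_err}. Writing $\Pi:=\mathds{1}-\ketbra{\mathbb{S}}{\mathbb{S}}$ (so $\rho_1=\Pi/2$ and $\rho_1^{\ox n}=2^{-n}\Pi^{\ox n}$), I would introduce the Hermitian operator $M:=\Pi^{\ox n}-\ketbra{\mathbb{S}}{\mathbb{S}}^{\ox n}$ and propose the dual variables $U=0$, $b_\mathbf{u}=0$, $a_\mathbf{u}=2^{-(n+1)}W_M(\mathbf{u})$, and $V=\tfrac{2^{n}-1}{2^{n+1}}\ketbra{\mathbb{S}}{\mathbb{S}}^{\ox n}$. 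Using the reconstruction identity $M=\sum_\mathbf{u}W_M(\mathbf{u})A_\mathbf{u}$, the dual inequality's right-hand side is $\sum_\mathbf{u}(a_\mathbf{u}-b_\mathbf{u})A_\mathbf{u}=2^{-(n+1)}M$, and substituting $\rho_1^{\ox n}=2^{-n}\Pi^{\ox n}$ collapses the left-hand side $V+\tfrac12(\rho_1^{\ox n}-\rho_0^{\ox n})$ to exactly $2^{-(n+1)}M$ as well, so the constraint holds with equality and $U=0$ is admissible. The objective is then $\tfrac12-\tr(V)=\tfrac12-\tfrac{2^{n}-1}{2^{n+1}}=2^{-(n+1)}$, matching the primal value.

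The only nontrivial input, and the step I expect to be the real work, is verifying $V\succeq0$ and $a_\mathbf{u}\ge0$, i.e.\ that $W_M(\mathbf{u})\ge0$ for every $\mathbf{u}$. I would prove this from the product structure $W_M(\mathbf{u})=\prod_i W_\Pi(\mathbf{u}_i)-\prod_i W_{\ketbra{\mathbb{S}}{\mathbb{S}}}(\mathbf{u}_i)$. A single-qutrit evaluation (using $A_\mathbf{0}\ket{\mathbb{S}}=-\ket{\mathbb{S}}$) gives $W_{\ketbra{\mathbb{S}}{\mathbb{S}}}(\mathbf{0})=-\tfrac13$ and $W_{\ketbra{\mathbb{S}}{\mathbb{S}}}(\mathbf{u})=\tfrac16$ for $\mathbf{u}\ne\mathbf{0}$, hence $W_\Pi(\mathbf{0})=\tfrac23$ and $W_\Pi(\mathbf{u})=\tfrac16$ otherwise. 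If exactly $k$ of the labels $\mathbf{u}_i$ equal $\mathbf{0}$, then $W_M(\mathbf{u})=(\tfrac16)^{n-k}\big[(\tfrac23)^{k}-(-\tfrac13)^{k}\big]$, and the bracket equals $(\tfrac23)^{k}+(\tfrac13)^{k}>0$ when $k$ is odd and $(\tfrac23)^{k}-(\tfrac13)^{k}\ge0$ when $k$ is even; thus $W_M(\mathbf{u})\ge0$ always. This certifies the dual point, and together with achievability establishes $P_{\mathrm e}^{\textup{\tiny PWF}}(\rho_0^{\ox n},\rho_1^{\ox n},\tfrac12)=2^{-(n+1)}$.
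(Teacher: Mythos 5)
Your proof is correct and follows essentially the same route as the paper's: the same primal POVM $E=(\mathds{1}-\ketbra{0}{0})^{\ox n}$ verified via the $\{0,1\}$-valued Wigner function of $\ket{0}$, and the identical dual certificate $V=\tfrac{2^n-1}{2^{n+1}}\ketbra{\mathbb{S}}{\mathbb{S}}^{\ox n}$, $U=0$, $b_\mathbf{u}=0$ with the $a_\mathbf{u}$ given by the (unique) expansion of $2^{-(n+1)}(\Pi^{\ox n}-\rho_0^{\ox n})$ in the $A_\mathbf{u}$ basis. The only difference is cosmetic: you check $a_\mathbf{u}\ge 0$ by directly evaluating the factorized Wigner function of $M$ from the known values $W_{\mathbb{S}}(\mathbf{0})=-\tfrac13$, $W_{\mathbb{S}}(\mathbf{u}\neq\mathbf{0})=\tfrac16$, whereas the paper re-expands each identity factor as $\tfrac13\sum_{\mathbf{u}}A_{\mathbf{u}}$ and reads off nonnegative coefficients combinatorially.
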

\begin{proof}
First, we are going to prove $P_{\mathrm{e}}^{{\textup{\tiny PWF}}}(\rho_0^{\ox n}, \rho_1^{\ox n}, \frac{1}{2}) \leq \frac{1}{2^{n+1}}$ using SDP~\eqref{Eq:err_sdp}. We will show that $E = (\ketbra{\mathbb{K}}{\mathbb{K}} + \ketbra{\mathbb{S}}{\mathbb{S}})^{\ox n}$ is a feasible solution with a discrimination error $\frac{1}{2^{n+1}}$. In specific, it is easy to check $0\leq E\leq \mathds{1}$. Furthermore, we can check that $\ket{0}, \ket{\mathbb{K}}$ and $\ket{\mathbb{S}}$ are eigenvectors of $A_\mathbf{0}$ with eigenvalue $+1, +1$ and $-1$, respectively. It follows
\begin{equation}
\tr[A_{\mathbf{u}}(\ketbra{\mathbb{K}}{\mathbb{K}} + \ketbra{\mathbb{S}}{\mathbb{S}})] = \tr[A_{\mathbf{u}}(\mathds{1}-\ketbra{0}{0})] = 1-\tr(A_{\mathbf{u}}\ketbra{0}{0})\geq 0,    
\end{equation}
where the inequality is due to the fact that $A_{\mathbf{u}}$ has eigenvalues no larger than $1$. Also, we have $\tr[A_{\mathbf{u}}(\ketbra{\mathbb{K}}{\mathbb{K}} + \ketbra{\mathbb{S}}{\mathbb{S}})]=1-\tr(A_{\mathbf{u}}\ketbra{0}{0})\leq 1$ as $\ketbra{0}{0}$ is a stabilizer state with $\tr(A_{\mathbf{u}}\ketbra{0}{0})\geq 0$. Thus, for the $n$-copy case, we have
\begin{equation}
    0\leq \prod_{i=1}^n \Big(\bra{\mathbb{K}}A_{\mathbf{u}_i}\ket{\mathbb{K}} + \bra{\mathbb{S}}A_{\mathbf{u}_i}\ket{\mathbb{S}} \Big) \leq 1,
\end{equation}
which makes $E$ satisfies $0\leq \tr[EA_\mathbf{u}]\leq 1$. Hence, $E$ is a feasible solution to the primal SDP~\eqref{Eq:err_sdp}. Note that 
\begin{subequations}
\begin{align}
    \tr[(\ketbra{\mathbb{K}}{\mathbb{K}} + \ketbra{\mathbb{S}}{\mathbb{S}})\rho_0] &= \braket{\mathbb{K}}{\mathbb{S}}\braket{\mathbb{S}}{\mathbb{K}} + \braket{\mathbb{S}}{\mathbb{S}}\braket{\mathbb{S}}{\mathbb{S}}=1,\\ 
    \tr[(\ketbra{\mathbb{K}}{\mathbb{K}} + \ketbra{\mathbb{S}}{\mathbb{S}})\rho_1] &= \frac{1}{2}\bra{\mathbb{K}}(\mathds{1}-\ketbra{\mathbb{S}}{\mathbb{S}})\ket{\mathbb{K}} + \frac{1}{2}\bra{\mathbb{S}}(\mathds{1}-\ketbra{\mathbb{S}}{\mathbb{S}})\ket{\mathbb{S}}=\frac{1}{2}.
\end{align}
\end{subequations}
The corresponding discrimination error is
\begin{subequations}
\begin{align}
    P_{pr}^* &= \frac{1}{2} + \frac{1}{2}\tr\Big[(\ketbra{\mathbb{K}}{\mathbb{K}} + \ketbra{\mathbb{S}}{\mathbb{S}})^{\ox n}(\rho_1^{\ox n} - \rho_0^{\ox n})\Big]\\
    &= \frac{1}{2} + \frac{1}{2}\tr\Big[(\ketbra{\mathbb{K}}{\mathbb{K}}\rho_1 + \ketbra{\mathbb{S}}{\mathbb{S}}\rho_1)^{\ox n} - (\ketbra{\mathbb{K}}{\mathbb{K}}\rho_0 + \ketbra{\mathbb{S}}{\mathbb{S}}\rho_0)^{\ox n}\Big]\\
    &= \frac{1}{2} + \frac{1}{2}\left[\frac{1}{2^{n}} - 1\right] = \frac{1}{2^{n+1}}.
\end{align}
\end{subequations}

Second, we use the dual SDP~\eqref{Eq:dual_sdp_err} to show $P_{\mathrm{e}}^{{\textup{\tiny PWF}}}(\rho_0^{\ox n}, \rho_1^{\ox n}, \frac{1}{2}) \geq \frac{1}{2^{n+1}}$. We will construct a valid $a_\mathbf{u}$ combined with $\{V = (2^n-1)\rho_0/2^{n+1}, U = 0, b_{\mathbf{u}} = 0\}$ as a feasible solution to the dual problem. We note that $\rho_0 = (\mathds{1}- A_\mathbf{0})/2$ and $\rho_1 = (\mathds{1} + A_\mathbf{0})/4$ and introduce the following notation. Let $\mathbf{k} = (k_1, k_2, ..., k_n)\in \{0,1\}^n$ be a $n$-bit binary string and $|\mathbf{k}|$ be the Hamming weight of it. We then denote $A_{\mathbf{k}} = A_{k_1}\ox A_{k_2}\ox ... \ox A_{k_n}$ where $A_{k_i} = A_\mathbf{0}$ if $k_i = 1$ and $A_{k_i} = \mathds{1}$ if $k_i = 0$. Then we have
\begin{subequations}
\begin{align}
    V-U + \frac{1}{2}(\rho_1^{\ox n} - \rho_0^{\ox n}) &= \frac{2^n-1}{2^{n+1}}\left(\frac{\mathds{1} - A_\mathbf{0}}{2}\right)^{\ox n} + \frac{1}{2}\left[\left(\frac{\mathds{1} + A_\mathbf{0}}{4}\right)^{\ox n} - \left(\frac{\mathds{1}-A_\mathbf{0}}{2}\right)^{\ox n}\right]\\
    &= \frac{1}{2^{2n+1}}(\mathds{1} + A_\mathbf{0})^{\ox n} - \frac{1}{2^{2n+1}}(\mathds{1} - A_\mathbf{0})^{\ox n}\\
    &= \frac{1}{2^{2n+1}}\sum_{\mathbf{k}\in \{0,1\}^n}\Big(1 - (-1)^{|\mathbf{k}|}\Big) A_{\mathbf{k}}\label{Eq:IeqSUM1}\\
    &= \frac{1}{2^{2n+1}}\sum_{\mathbf{k}\in \{0,1\}^n}\Big(\frac{1 - (-1)^{|\mathbf{k}|}}{3^{n-|\mathbf{k}|}} \sum_{\mathbf{u}_{\mathbf{k}}} A_{\mathbf{u}_{\mathbf{k}}}\Big)\label{Eq:IeqSUM2},
\end{align}
\end{subequations}
where $A_{\mathbf{u}_{\mathbf{k}}} = A_{\mathbf{u}_1}\ox A_{\mathbf{u}_2}\ox \cdots \ox A_{\mathbf{u}_n}$ with $\mathbf{u}_j = \mathbf{0}$ if $k_j=1$ for $j=1,2...,n$. To derive Eq.~\eqref{Eq:IeqSUM2} from Eq.~\eqref{Eq:IeqSUM1}, we express each $A_{k_i} = \mathds{1}$ with $k_i = 0$ in $A_{\mathbf{k}}$ as $\mathds{1} = \frac{1}{3}\sum_{\mathbf{u}} A_{\mathbf{u}}$, where each $A_{\mathbf{k}}$ contains $(n-|\mathbf{k}|)$ occurrences of $\mathds{1}$. Thus, we can find a set of $\hat{a}_\mathbf{u}$ such that 
\begin{equation}
    V-U + \frac{1}{2}(\rho_1^{\ox n} - \rho_0^{\ox n}) = \sum_{\mathbf{u}} \hat{a}_{\mathbf{u}} A_{\mathbf{u}}, 
\end{equation}
by the following argument. For each $A_\mathbf{u'}$ in the $n$-copy system, we may find it as the sum of some terms in Eq.~\eqref{Eq:IeqSUM2} with all coefficient positive since $\frac{1 - (-1)^{|\mathbf{k}|}}{3^{n-{|\mathbf{k}|}}}\geq 0$. We can then let $\hat{a}_\mathbf{u}$ be the sum of those coefficients, which makes $\{V = (2^n-1)\rho_0/2^{n+1}, \hat{a}_\mathbf{u}, U = 0, b_{\mathbf{u}} = 0\}$ a feasible solution of the dual SDP. Thus we have  
\begin{equation}
    P_{du}^* = \frac{1}{2} - \tr(V) = \frac{1}{2^{n+1}}.
\end{equation}
Combining it with the primal part and utilizing Slater's condition for strong duality~\cite{boyd2004convex}, we have that $P_{\mathrm{e}}^{{\textup{\tiny PWF}}}(\rho_0^{\ox n}, \rho_1^{\ox n}, \frac{1}{2}) = \frac{1}{2^{n+1}}$.
\end{proof}
\renewcommand{\theproposition}{S\arabic{proposition}}

Notice that for a given measurement $\mathbf{M}:=\{M_j\}_j$, we define the PWF robustness of measurement as
\begin{equation}
    \mathbf{R}_{\tiny {\mathcal{E}_{\text{PWF}}}}(\mathbf{M}) = \left.\min\Big\{r\in\mathbb{R}_+ \right| M_j+rN_j \in \mathcal{E}_{\text{PWF}}\, \forall j, \{N_j\}_j \in \mathcal M \Big\},
\end{equation} 
where we denote by $\mathcal M$ the set of all possible POVMs, and denote by $\mathcal{E}_{\text{PWF}}$ the set of all PWF effects. An effect $E$ belongs to $\mathcal{E}_{\text{PWF}}$ if it has PWFs. The \textit{data-hiding ratio}~\cite{Lami_2018} associated with PWF POVMs is defined in our manuscript as
\begin{equation}\label{Eq:dhratio} 
    R(\text{PWF}) = \max \frac{\|p\rho-(1-p)\sigma\|_{\tiny \text{All}}}{\|p\rho-(1-p)\sigma\|_{\tiny \text{PWF}}},
\end{equation}
where the maximization ranges over all pairs of states $\rho,\sigma$ and a priori probabilities $p$ (here we also define $\|\cdot\|_{\tiny \text{PWF}}$ as the \textit{distinguishability norm} associated with PWF POVMs). In an intuitive sense, we could imagine that a higher data-hiding ratio in Eq.~\eqref{Eq:dhratio} will be obtained if the optimal POVMs for $\|\cdot\|_{\textbf{All}}$ exhibit `less PWF'. This would suggest a more pronounced disparity allowing the agent to access the optimal discrimination strategy without a `magic factory' in the given physical setting. Therefore, given an equiprobable pair of states $\{\rho,\sigma\}$, we define $\mathbf{R}^*_{\mathcal{E}_{\text{PWF}}}(\mathbf{M}_{\rho,\sigma})$ as the minimum PWF robustness of measurement that an optimal POVM must have to discriminate $\{\rho,\sigma\}$. It can be computed via the following SDP 
\begin{subequations}
\begin{align}
\mathbf{R}^*_{\mathcal{E}_{\text{PWF}}}(\mathbf{M}_{\rho,\sigma}) = \min &\;\; r\\
 {\rm s.t.} & \; E_0, E_1, N_0, N_1 \geq 0, \\
            & \; E_0+E_1=\mathds{1}, N_0 + N_1 = r\cdot \mathds{1},\\
            & \; \tr\left[ (\rho - \sigma) E_0\right] = \frac{1}{2}\| \rho - \sigma\|_1,\label{Eq:optPOVM}\\
            & \; W(E_0+N_0 | \mathbf{u}) \geq 0, W(E_1+N_1 | \mathbf{u}) \geq 0, \forall \mathbf{u},\label{Eq:pwf}
\end{align} 
\end{subequations} 
where the constraint in Eq.~\eqref{Eq:optPOVM} ensures that optimal discrimination is achieved, and the constraints in Eq.~\eqref{Eq:pwf} ensure that $E_j+rN_j\in \text{PWF}$. We generate 500 equiprobable pair of states $\{\rho_j,\sigma_j\}_{j=1}^{500}$ where $\rho_j$ is a random pure qutrit state according to the Haar measure and $\sigma_j$ is its orthogonal complement. Then we compute the ratio $R^*(\text{PWF},\{\rho,\sigma\}) = \|\frac{1}{2}\rho_j-\frac{1}{2}\sigma_j\|_{\tiny \text{All}}/\|\frac{1}{2}\rho-\frac{1}{2}\sigma\|_{\tiny \text{PWF}}$ and $\mathbf{R}^*_{\mathcal{E}_{\text{PWF}}}(\mathbf{M}_{\rho_j,\sigma_j})$. The numerical calculations are implemented in MATLAB~\cite{MATLAB} with the interpreters CVX~\cite{cvx} and QETLAB~\cite{qetlab}. The results are depicted as follows.
\begin{figure}[H]
    \centering
    \includegraphics[width=0.5\linewidth]{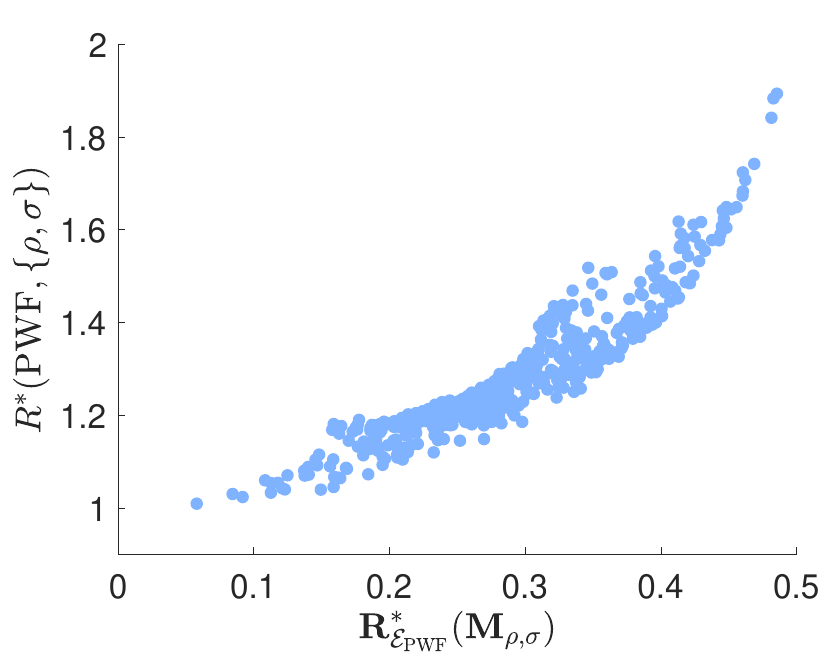}
    \label{fig:Rob_DHR} 
\end{figure}
We observe that there is a possible correlation between the PWF robustness of measurement and the data-hiding ratio associated with a state pair: as the optimal POVM for a state pair exhibits a higher PWF robustness, the corresponding data-hiding ratio also increases. However, their specific relationship remains unclear so far. This experiment also indicates that the data-hiding ratio associated with the Strange state and its orthogonal complement is already relatively high, which equals $2$ when $n=1$ as stated in Eq.~\eqref{eq:data_hiding_PWF} in our manuscript. A deeper relationship between the data-hiding ratio and the PWF robustness of measurement in the case of PWF POVMs merits further investigation.

%%%%%%%%%%%%%%%%%%%%%%%%%%%%%%%%%%%%%%%%%%%%%%%%%%%%%%%%%%%%%%
%%%%%%%%%%%%%%%%%%%%%%%%%%%%%%%%%%%%%%%%%%%%%%%%%%%%%%%%%%%%%%
\section{Distinctions between the QRT of magic states and entanglement}\label{appendix:distinctions}

\renewcommand\theproposition{\ref{thm:absence_UPB}}
\setcounter{proposition}{\arabic{proposition}-1}
\begin{theorem}
For a subspace $\cS\in \cH_d$, if $\cS$ has a set of basis $\{\ket{\psi_i}\}_{i=1}^n$ where every $\ket{\psi_i}$ is a stabilizer state, then $\cS$ is PWF extendible.
\end{theorem}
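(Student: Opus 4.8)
The goal is to produce a single PWF state whose support lies entirely in $\cS^\perp$; by definition this makes $\cS$ PWF extendible. Since the companion statement (that orthogonal stabilizer states are perfectly distinguishable using $E_i=\ketbra{\psi_i}{\psi_i}$) already requires the $\ket{\psi_i}$ to be mutually orthogonal, I take the stabilizer basis $\{\ket{\psi_i}\}_{i=1}^n$ to be orthonormal and assume $n<d$ so that $\cS^\perp\neq\{0\}$. The natural candidate is the maximally mixed state on the complement, $\rho=\Pi^\perp/(d-n)$ with $\Pi=\sum_{i=1}^n\ketbra{\psi_i}{\psi_i}$ and $\Pi^\perp=\mathds{1}-\Pi$. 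This $\rho$ is manifestly a density operator with $\supp(\rho)=\cS^\perp$, so the entire content of the theorem reduces to the single inequality $W_\rho(\mathbf{u})\geq0$ for all $\mathbf{u}$.

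Using $W_\rho(\mathbf{u})=\tfrac1d\tr[A_\mathbf{u}\rho]$, linearity, and the property $\tr[A_\mathbf{u}]=1$, I would first rewrite the positivity condition as $W_\rho(\mathbf{u})=\frac{1-\tr[A_\mathbf{u}\Pi]}{d(d-n)}$, so that it suffices to prove $\tr[A_\mathbf{u}\Pi]\leq1$ for every $\mathbf{u}$. Expanding, $\tr[A_\mathbf{u}\Pi]=\sum_i\bra{\psi_i}A_\mathbf{u}\ket{\psi_i}=\sum_i d\,W_{\psi_i}(\mathbf{u})$. Two facts then pin this down. First, each stabilizer state satisfies $W_{\psi_i}(\mathbf{u})\geq0$ by the discrete Hudson theorem, and $d\,W_{\psi_i}(\mathbf{u})=\bra{\psi_i}A_\mathbf{u}\ket{\psi_i}\leq1$ because $A_\mathbf{u}$ has largest eigenvalue $+1$ (Lemma~\ref{lem:PSP_Op_Eigenvale_d}). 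Second, and this is the crux, orthonormality forces the Wigner supports to be disjoint: from $0=|\braket{\psi_i}{\psi_j}|^2=\tr[\ketbra{\psi_i}{\psi_i}\ketbra{\psi_j}{\psi_j}]=d\sum_\mathbf{u}W_{\psi_i}(\mathbf{u})W_{\psi_j}(\mathbf{u})$ together with nonnegativity of each Wigner function, one gets $W_{\psi_i}(\mathbf{u})W_{\psi_j}(\mathbf{u})=0$ for all $\mathbf{u}$ whenever $i\neq j$. Hence at any fixed phase-space point at most one term in $\sum_i d\,W_{\psi_i}(\mathbf{u})$ is nonzero, and that term is at most $1$; therefore $\tr[A_\mathbf{u}\Pi]\leq1$ and $W_\rho(\mathbf{u})\geq0$.

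The step I expect to carry the real weight is the disjoint-support observation: it is precisely orthogonality of the stabilizer states that turns the maximally mixed complement into a PWF state. Without it the argument genuinely breaks, since if the spanning stabilizer states are merely linearly independent their Wigner supports can overlap, $\tr[A_\mathbf{u}\Pi]$ can exceed $1$ at multiply covered points, and $\rho$ acquires negative Wigner values (indeed the complement can then be a pure magic state, as for $\mathrm{span}\{\ket{0},\ket{+}\}$ in $d=3$, whose orthogonal complement is the Strange state). Once $W_\rho\geq0$ is established, $\rho$ is a PWF state supported on $\cS^\perp$, so $\cS$ is PWF extendible; choosing $E_i=\ketbra{\psi_i}{\psi_i}$ together with $E_n=\Pi^\perp$ then delivers the advertised perfect discrimination of orthogonal stabilizer states and the absence of a UPB analogue in the magic-state resource theory.
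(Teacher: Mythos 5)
Your proof is correct and follows essentially the same route as the paper's: both construct the normalized complementary projector $\Pi^\perp/(d-n)$ and establish its positivity by showing that orthogonality plus nonnegativity of stabilizer Wigner functions forces the Wigner supports of the $\ket{\psi_i}$ to be pairwise disjoint, so $\sum_i d\,W_{\psi_i}(\mathbf{u})\leq 1$ at every point. The only cosmetic difference is that the paper invokes the discrete Hudson theorem's exact values $W_{\psi_i}(\mathbf{u})\in\{0,1/d\}$ where you use the eigenvalue bound on $A_\mathbf{u}$; your explicit handling of the $n<d$ caveat is a minor tidiness the paper leaves implicit.
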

\begin{proof}
Since $\{\ket{\psi_i}\}_{i=1}^n$ is a basis for $\cS$, we have $\ket{\psi_i}$ and $\ket{\psi_j}$ are orthogonal which yields
\begin{equation}\label{Eq:innerprod}
    \langle\psi_i |\psi_j\rangle = \sum_{\bu}W_{\psi_i}(\bu)W_{\psi_j}(\bu) = 0.
\end{equation}
for any $i\neq j$. Note that every pure stabilizer state has Wigner functions $0$ or $1/d$~\cite{gross2006hudson}. Then we know that for a fixed point $\bu'$, there is at most one state $\ket{\psi_{j'}}$ that has $W_{\psi_{j'}}(\bu') = 1/d$. For any other states $\ket{\psi_i}, i\neq j'$, we have $W_{\psi_i}(\bu') = 0$ otherwise $\braket{\psi_{j'}}{\psi_i} \geq 1/d^2>0$, a contradiction to Eq.~\eqref{Eq:innerprod}. Thus, we have $\sum_{i=1}^n W_{\psi_i}(\mathbf{u}) = 0$ or $\sum_{i=1}^n W_{\psi_i}(\mathbf{u}) = 1/d$. Then we denote $P_{\cS} = \sum_{i=1}^n \ketbra{\psi_i}{\psi_i}$ as the projection of $\cS$ and consider its orthogonal complement $P_{\cS}^\perp = \mathds{1}-P_{\cS}$. Considering $P_{\cS}^\perp$ as an effect of the POVM $\{P_{\cS}^\perp, P_{\cS} \}$, we have
\begin{equation}
    W(P_{\cS}^\perp|\mathbf{u}) = 1 - d\sum_{i=1}^n W_{\psi_i}(\mathbf{u}) = 1 \text{ or } 0,
\end{equation}
which shows that $P_{\cS}^\perp$ has PWFs. After normalization, we can obtain a PWF state supported on $\cS^{\perp}$, which indicates that $\cS$ is PWF extendible.
\end{proof}

\renewcommand\theproposition{\ref{prop:magic_cost_sstate}}
\setcounter{proposition}{\arabic{proposition}-1}
\begin{proposition}
Let $\rho_0$ be the Strange state $\ketbra{\mathbb{S}}{\mathbb{S}}$ and $\rho_1 = (\mathds{1}-\ketbra{\mathbb{S}}{\mathbb{S}})/2$ be its orthogonal complement. $\rho_0 \otimes \tau^{\otimes k}$  and $\rho_1 \otimes \tau^{\otimes k}$ cannot be perfectly distinguished for any qutrit magic state $\tau$ and $k =1$ or $2$.

\end{proposition}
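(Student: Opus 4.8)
The plan is to reduce the $(k+1)$-qutrit discrimination problem to a single-qutrit constraint and then close it with a mana / sum-negativity budget. Suppose, for contradiction, that a PWF POVM $\{E_0,E_1\}$ perfectly distinguishes $\rho_0\ox\tau^{\ox k}$ from $\rho_1\ox\tau^{\ox k}$; then $E_0$ is a PWF effect with $\tr[E_0(\rho_0\ox\tau^{\ox k})]=1$ and $\tr[E_0(\rho_1\ox\tau^{\ox k})]=0$. First I would trace out the $k$ ancillary qutrits against $\tau^{\ox k}$ to form the reduced single-qutrit effect $F:=\tr_{\mathrm{anc}}[(\mathds{1}\ox\tau^{\ox k})E_0]$, characterized by $\tr[OF]=\tr[(O\ox\tau^{\ox k})E_0]$ for every $O\in\cL(\cH_3)$. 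Writing $\tau^{\ox k}$ as a convex combination of pure states shows $0\le F\le\mathds{1}$, so $F$ is a legitimate single-qutrit effect. The two perfect-discrimination conditions become $\bra{\mathbb{S}}F\ket{\mathbb{S}}=1$ and $\tr[(\mathds{1}-\proj{\mathbb{S}})F]=0$, and together with $0\le F\le\mathds{1}$ these force $F=\proj{\mathbb{S}}$ exactly.

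Next I would pass everything to Wigner functions. Using property 5 to expand $\tau^{\ox k}=\sum_{\mathbf v}W_{\tau^{\ox k}}(\mathbf v)A_{\mathbf v}$ together with the factorization $A_{(\mathbf u,\mathbf v)}=A_{\mathbf u}\ox A_{\mathbf v}$, the defining relation of $F$ gives $W(F|\mathbf u)=\sum_{\mathbf v}W_{\tau^{\ox k}}(\mathbf v)\,W(E_0|\mathbf u,\mathbf v)$. Here $W(E_0|\mathbf u,\mathbf v)\ge 0$ since $E_0$ is PWF, and $W(E_0|\mathbf u,\mathbf v)\le 1$ since $W(E_0|\cdot)+W(E_1|\cdot)=\tr[A_{(\mathbf u,\mathbf v)}]=1$ with $E_1$ also PWF. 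Because $\ket{\mathbb{S}}$ saturates the extremal bound $W_\rho(\mathbf u)\ge-1/d$~\cite{Huangjun_2016}, there is a point $\mathbf u^{*}$ with $W_{\proj{\mathbb{S}}}(\mathbf u^{*})=-1/3$, i.e.\ $W(\proj{\mathbb{S}}|\mathbf u^{*})=-1$. Evaluating $F=\proj{\mathbb{S}}$ at $\mathbf u^{*}$ and minimizing the right-hand side over admissible weights $W(E_0|\mathbf u^{*},\mathbf v)\in[0,1]$ (put weight $1$ exactly on the negative Wigner points of $\tau^{\ox k}$) yields $-1\ge-\mathrm{sn}(\tau^{\ox k})$, so the ancilla must supply sum negativity $\mathrm{sn}(\tau^{\ox k})\ge 1$, where $\mathrm{sn}(\sigma):=\sum_{\mathbf v:\,W_\sigma(\mathbf v)<0}|W_\sigma(\mathbf v)|$.

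Finally I would derive the contradiction from a mana bound. By the multiplicativity of the Wigner function under tensor products used in the proof of Theorem~\ref{thm:criterion_nm}, $\sum_{\mathbf v}|W_{\tau^{\ox k}}(\mathbf v)|=(\sum_{\mathbf v}|W_\tau(\mathbf v)|)^{k}$, i.e.\ mana $\mathcal{M}(\sigma)=\log(1+2\,\mathrm{sn}(\sigma))$ is additive and $1+2\,\mathrm{sn}(\tau^{\ox k})=(1+2\,\mathrm{sn}(\tau))^{k}$. Invoking the maximal single-qutrit mana $\mathcal{M}(\tau)\le\log(5/3)$, equivalently $\mathrm{sn}(\tau)\le 1/3$~\cite{Wang2018,Veitch2014}, gives $\mathrm{sn}(\tau)\le 1/3<1$ for $k=1$ and $\mathrm{sn}(\tau^{\ox 2})\le((5/3)^2-1)/2=8/9<1$ for $k=2$. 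In both cases $\mathrm{sn}(\tau^{\ox k})<1$ contradicts the requirement $\mathrm{sn}(\tau^{\ox k})\ge 1$, so no PWF POVM perfectly distinguishes the two states. This also pinpoints why the statement stops at $k=2$: since $(5/3)^{3}>3$, at $k=3$ the bound permits $\mathrm{sn}(\tau^{\ox 3})\ge 1$ and the obstruction disappears; an analogous statement for the Norell state $\ket{\mathbb{N}}$ follows from the same reduction.

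The step I expect to be the crux is the tight numerics: the whole $k\le 2$ dichotomy hinges on the maximal single-qutrit mana being exactly $\log(5/3)$ (sum negativity $1/3$), which lies just below the threshold $(\log 3)/2$ demanded at $k=2$. I would either cite this extremal value or re-derive it by optimizing $\mathrm{sn}$ over qutrit states; the remaining care is to verify that the reduced effect $F$ is a genuine effect and that the weight bound $W(E_0|\cdot)\in[0,1]$ is applied correctly to certify $\mathrm{sn}(\tau^{\ox k})\ge 1$.
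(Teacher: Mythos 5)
Your proposal is correct and follows essentially the same route as the paper's proof: reduce to a single-qutrit effect $F=\tr_{\mathrm{anc}}[(\mathds{1}\ox\tau^{\ox k})E_0]$ which perfect discrimination forces to equal $\proj{\mathbb{S}}$, evaluate its Wigner function at the maximally negative point using $W(E_0|\cdot)\in[0,1]$ to obtain $\mathrm{sn}(\tau^{\ox k})\ge d\cdot\mathrm{maxneg}(\proj{\mathbb{S}})=1$, and contradict this via the composition law $\mathrm{sn}(\tau^{\ox k})=[(2\,\mathrm{sn}(\tau)+1)^k-1]/2$ together with the maximal qutrit sum negativity $1/3$. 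The only additions beyond the paper are the explicit verification that $F$ is a legitimate effect and the observation that the obstruction vanishes at $k=3$, both of which are correct.
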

\begin{proof}
First, suppose there is a PWF POVM $\{E, \mathds{1}-E\}$ that can perfectly distinguish $\rho_0 \otimes \tau^{\otimes k}$ and $\rho_1 \otimes \tau^{\otimes k}$. Then we have
\begin{equation}\label{Eq:s_cost}
    \tr[(\rho_0 \otimes \tau^{\otimes k})E] = 1, \tr[(\rho_1 \otimes \tau^{\otimes k})E] = 0.
\end{equation}
We can write $\tr[(\rho_0 \otimes \tau^{\otimes k})E] = \tr[\rho_0 \tr_2 [(\mathds{1}\otimes \tau^{\otimes k})E]] = 1$. Notice the fact that when $\rho_0$ is a pure state, $\tr(\rho_0X) = 1, \tr(\rho_1X) = 0$ if and only if $X = \rho_0$. Then for any $\mathbf{u}_1$, we have
\begin{equation}\label{Eq:s_cost3}
    \tr(\rho_0 A_{\mathbf{u_1}}) =  \frac{1}{d^k}\sum_{\mathbf{u_2},\cdots, \mathbf{u_{k+1}}}\tr(EA_{\mathbf{u_1},\cdots, \mathbf{u_{k+1}}}) \tr(\tau A_{\mathbf{u_2}}) \cdots \tr(\tau A_{\mathbf{u_{k+1}}}).
\end{equation}
Suppose the value of $\text{maxneg}(\rho_0)$ is obtained at phase point $\mathbf{u_1} = (a,b)$ for $\rho_0$, where $\text{maxneg}(\rho) \coloneqq -\min_{\mathbf{u}}W_{\rho}(\mathbf{u})$ denotes the maximal negativity of $\rho$.
We consider the right hand of Eq.~\eqref{Eq:s_cost3} by choosing $\mathbf{u_1} = (a,b)$:
\begin{align}
    -d\cdot \text{maxneg}(\rho_0) = &\frac{1}{d^k}\sum_{\mathbf{u_2},\cdots, \mathbf{u_{k+1}}}\tr(EA_{(a,b),\cdots, \mathbf{u_{k+1}}}) \tr(\tau A_{\mathbf{u_2}}) \cdots \tr(\tau A_{\mathbf{u_{k+1}}})\\
    \geq &
    \max(\tr(EA_{(a,b),\cdots, \mathbf{u_{k+1}}})) \frac{1}{d^k} \sum^{< 0}_{\mathbf{u_2},\cdots, \mathbf{u_{k+1}}} \tr(\tau A_{\mathbf{u_2}}) \cdots \tr(\tau A_{\mathbf{u_{k+1}}})\\
    &+ \min(\tr(EA_{(a,b),\cdots, \mathbf{u_{k+1}}})) \frac{1}{d^k} \sum^{\geq 0}_{\mathbf{u_2},\cdots, \mathbf{u_{k+1}}} \tr(\tau A_{\mathbf{u_2}}) \cdots \tr(\tau A_{\mathbf{u_{k+1}}}) \\
    \geq & \frac{1}{d^k} \sum^{< 0}_{\mathbf{u_2},\cdots, \mathbf{u_{k+1}}} \tr(\tau A_{\mathbf{u_2}}) \cdots \tr(\tau A_{\mathbf{u_{k+1}}}) \label{Eq:geq_negative_part}\\
    = & - \text{sn}(\tau^{\otimes k}),
\end{align}
where the inequality in~Eq.~\eqref{Eq:geq_negative_part} is due to the fact that $0 \leq W(E|\mathbf{u}) \leq 1$ for the PWF POVM $\{E, \mathds{1}-E\}$ and $\text{sn}(\rho)\coloneqq \sum_{\mathbf{u}:W_{\rho}(\mathbf{u}) < 0}|W_{\rho}(\mathbf{u})|$ denotes the sum negativity of a magic state $\rho$. Thus, we have 
\begin{equation}
    d\cdot \text{maxneg}(\rho_0) \leq \text{sn}(\tau^{\otimes k}).    
\end{equation}
Note that the Strange state $\rho_0 = \ketbra{\mathbb{S}}{\mathbb{S}}$ satisfies $d\cdot \text{maxneg}(\rho_0) = 1$, which implies $\text{sn}(\tau^{\otimes k}) \geq 1$. Since it has been shown that the maximal sum negativity of a qutrit state is $1/3$~\cite{Veitch2014}, we conclude that
\begin{equation}\label{Eq:sn_le1}
    \text{sn}(\tau^{\otimes k})= [(2\text{sn}(\tau)+1)^k-1]/2 \leq [(5/3)^k-1]/2 < 1,
\end{equation}
for any qutrit magic state $\tau$ and $k=1$ or $2$, where we use the composition law of $\text{sn}(\cdot)$ derived by Ref.~\cite{Veitch2014}. Eq.~\eqref{Eq:sn_le1} is in contradiction with the inequality $\text{sn}(\tau^{\otimes k}) \geq 1$. Thus, we complete the proof. 
\end{proof}

Similarly, we can conclude that for the case of Norell state $\rho_0 = \ketbra{\mathbb{N}}{\mathbb{N}}$, where $\ket{\mathbb{N}} = (-\ket{0} + 2\ket{1} - \ket{2})/\sqrt{6}$~\cite{Veitch2014}, $\rho_0 \otimes \tau$ and $\rho_1 \otimes \tau$ cannot be perfectly distinguished by PWF POVMs for any qutrit state $\tau$.

\end{document}